\documentclass[paper]{article}

\usepackage{amsmath, amsfonts, amssymb, amsthm, latexsym}
\usepackage{authblk}
\usepackage{todonotes}
\usepackage{color}

\usepackage{bm}
\usepackage{graphicx}
\usepackage{tikz}
\usepackage{subfigure}
\usepackage{comment}

\let\Re\undefined
\let\Im\undefined
\DeclareMathOperator{\Re}{Re}
\DeclareMathOperator{\Im}{Im}

\DeclareMathOperator{\supp}{supp}

\newcommand{\ds}{\displaystyle}

\newtheorem{theorem}{Theorem}[section]
\newtheorem{lemma}[theorem]{Lemma}
\newtheorem{proposition}[theorem]{Proposition}
\newtheorem{corollary}[theorem]{Corollary}
\theoremstyle{definition}
\newtheorem{definition}[theorem]{Definition}
\newtheorem{example}{Example}
\newtheorem{remark}[theorem]{Remark}
\newtheorem{assumption}[theorem]{Assumption}

\numberwithin{equation}{section}

\definecolor{shadecolor}{rgb}{0.95, 0.95, 0.86}
\usetikzlibrary{decorations.markings}

\def\be{\begin{equation}}
\def\ee{\end{equation}}
\def\bi{\begin{itemize}}
\def\ei{\end{itemize}}

\def\bea{\begin{eqnarray}}
\def\eea{\end{eqnarray}}
\def\bl{\begin{lemma}}
\def\el{\end{lemma}}
\def\bd{\begin{definition}}
\def\ed{\end{definition}}
\def\bp{\begin{proposition}}
\def\ep{\end{proposition}}

\def\br{\begin{remark}}
\def\er{\end{remark}}

\def\bt{\begin{theorem}}
\def\et{\end{theorem}}
\def\bc{\begin{corollary}}
\def\ec{\end{corollary}}

\def\ra{\rightarrow}

\newcommand{\G}{\Gamma}
\renewcommand{\O}{\Omega}
\renewcommand{\k}{\varkappa}
\renewcommand{\d}{\delta}

\newcommand{\e}{\epsilon}
\renewcommand{\o}{\omega}
\newcommand{\g}{\gamma}

\renewcommand{\part}{\partial}

\newcommand{\Res}{\text{Res} \,}

\def\Rscr{\mathcal R}

\newcommand{\sech}{\hbox{sech}}

\def\le{\left}
\def\ri{\right}

\def\C{{\mathbb C}}
\def\R{{\mathbb R}}
\def\N{{\mathbb N}}

\def\a{\alpha}

\def\g{\gamma}

\def\m{\mu}

\def\n{ {\nu}}

\def\1{{\bf 1}}
\def\r{\rho}

\def\s{ {\sigma}}
\def\t{ {\tau}}

\def\z{\zeta}

\def\hf{\frac{1}{2}}

\begin{document}

\title{On minimal energy solutions to certain classes of integral equations related to soliton gases for integrable 
systems}
\author[1]{Arno Kuijlaars}
\author[2]{Alexander Tovbis}

\affil[1]{Katholieke Universiteit Leuven, Belgium, arno.kuijlaars@kuleuven.be}
\affil[2]{University of Central Florida, Orlando FL, U.S.A.,
Alexander.Tovbis@ucf.edu}

\footnotetext{The work of the first author is supported by long term structural funding-Methusalem grant
of the Flemish Government, by
the Fonds Wetenschappelijk Onderzoek – Vlaanderen (FWO) and the Fonds de la Recherche Scientifique – FNRS under EOS Project No. 30889451,
and by FWO Flanders projects G.0864.16 and G.0910.20.
 The work of the second author is supported  by NSF grant DMS-2009647.}

\maketitle
\tableofcontents

\begin{abstract} 
We prove existence, uniqueness and non-negativity of solutions of certain integral equations 
describing the density of states $u(z)$ 
in the spectral theory of soliton  gases 
for the one dimensional integrable focusing Nonlinear  Schr\"{o}dinger Equation (fNLS) and for 
the Korteweg de Vries (KdV) equation. Our proofs are based on
 ideas and methods of potential theory. In particular, we show that the minimizing (positive) measure for 
 certain energy functional is absolutely continuous and its density $u(z)\geq 0$ solves the required integral equation. In a similar fashion we show that $v(z)$, the temporal analog of $u(z)$, is the difference of densities of two absolutely continuous measures. Together, integral equations for $u,v$ represent nonlinear dispersion relation
for the fNLS soliton gas.
We also discuss smoothness and other properties of the obtained solutions. 
Finally, we obtain exact solutions of the above integral equations in the case of a KdV condensate and a bound state 
fNLS condensate.
Our results is a first step 
towards a mathematical foundation for the spectral  theory of soliton and breather gases, which 
appeared in work of El and Tovbis, Phys. Rev. E, 2020.
It is expected that the presented ideas and methods will 
be useful for studying similar classes of integral equation describing, for example, breather gases for the fNLS, 
as well as  soliton gases of various
integrable systems.
\end{abstract}

\section{Introduction and statement of results}\label{intro}
\subsection{Introduction}
Let $\mathbb C^+$ denote the upper half-plane and $\Gamma^+ \subset \mathbb C^+ \cup \mathbb R$
be a compact and let  $\sigma : \Gamma^+ \to [0,\infty)$ 
be a continuous, non-negative function on $\Gamma^+$.
The motivation of this paper are two independent integral equations
\begin{align} \label{dr_soliton_gas1}
 \frac 1{\pi} \int _{\Gamma^+}\log \left|\frac{w-\bar z}{w-z}\right|
u(w)d\lambda(w)+\sigma(z)u(z)& = \Im z,\\
\label{dr_soliton_gas2}
\frac 1{\pi} \int _{\Gamma^+}\log \left|\frac{w-\bar z}{w-z}\right| v(w) d\lambda(w)+  \sigma(z) v(z)& = -4 \Im z\Re z,
\end{align}
for unknown functions $u$ and $v$ respectively, where $z\in\Gamma^+$ 
and $\lambda$ is some reference measure on $\Gamma^+$. 
For example, $\lambda$ could be the area measure
in a $2D$ context, or the arclength measure in the case of a contour $\Gamma^+$.
The exact meaning of $\lambda$ will be discussed 
in Assumption \ref{assumptiondm} below.

Our goal is to prove the existence and uniqueness
of such solutions and, what is especially important, the fact that the solution $u$ of \eqref{dr_soliton_gas1}
satisfies $u(z)\geq 0$ everywhere on $\Gamma^+$. This property of $u$ is natural from the interpretation 
of $u$ as ``density of states'' in the soliton gas theory, 
that is, the average number of ``nonlinear waves''
with given spectral characteristics
per unit of length and per unit of ``measure'' 
on $\Gamma^+$.
Thus, the present paper is a significant step towards the mathematical foundation of the spectral theory of soliton gases
for the focusing  Nonlinear Schr\"{o}dinger equation (fNLS) that was recently presented in \cite{ElTovbis},
as well as for the Korteweg de Vries equation (KdV) that was first presented in \cite{El2003}.

A brief description of how equations \eqref{dr_soliton_gas1}-\eqref{dr_soliton_gas2},
which we will call nonlinear dispersion relation (NDR),
appear in the spectral theory for the fNLS soliton gas  will be given in Section \ref{sect-backg}.
We will also consider the case of a more general right hand side in \eqref{dr_soliton_gas1} that we will 
denote  by $\varphi(z)$. 
Finally, we are also interested in the support of $u,v$ 
and the smoothness of 
$u,v$ under various assumptions on the smoothness of $\sigma$ and the geometry of $\Gamma^+$. 

If $\sigma>0$ on $\Gamma^+$ then  equations \eqref{dr_soliton_gas1}-\eqref{dr_soliton_gas2} are  Fredholm integral equations 
of the second kind.  In the case $\sigma \equiv 0$ on $\Gamma^+$  equations \eqref{dr_soliton_gas1}-\eqref{dr_soliton_gas2} are  
Fredholm integral equations 
of the first kind and the general case  $\sigma \geq 0$ on $\Gamma^+$ is sometimes called Fredholm integral equations 
of the third kind \cite{Prossdorf}. Whereas  there exists  well known theory for second kind Fredholm  equations
that we can use to prove the existence  and uniqueness
of  $u(z)$ when $\sigma>0$ on $\Gamma^+$, the difficulty still lies in proving that the obtained $u(z)\geq 0$ on $\Gamma^+$.
However,  when it comes to the general case  $\sigma \geq 0$, much less is known even about the existence of $u(z)$.

We study the NDR equations with potential theory for
the upper half-plane, as the function 
\begin{equation} 
	\frac{1}{\pi} \int _{\Gamma^+}\log \left|\frac{w-\bar z}{w-z}\right| u(w)d \lambda(w),
\end{equation}
defines the Green potential for the upper half-plane $\mathbb C^+$ of the measure 
\begin{equation} \label{measure}
d\mu =u(z)d \lambda(z).
\end{equation}
Then both equations  \eqref{dr_soliton_gas1}-\eqref{dr_soliton_gas2}   can be written as 
\begin{equation} \label{Gueq} 
G \mu + \sigma u = \varphi \qquad \text{ on } \Gamma^+, 
\end{equation}
where $\varphi(z)$ coincides with either $\Im z$ or with $-4 \Im z \Re z = -2\Im (z^2) $ respectively, and
we also write 
\begin{equation}  \label{Greenpot}
G\mu (z) = \frac{1}{\pi} \int \log \left| \frac{z-\overline{w}}{z-w} \right|
d\mu(w) 
\end{equation}
for the Green potential of a positive Borel measure  $\mu$ in $\mathbb C^+$. 
Note that $G\mu$ is superharmonic
on $\mathbb C^+$, harmonic on $\mathbb C^+ \setminus \supp(\mu)$
and $G\mu = 0$ on the real line and at infinity (provided that
$\mu$ has compact support in $\mathbb C^+$).

Consider first the case of  $\sigma \equiv 0$ on $\Gamma^+$, which corresponds to the soliton condensate, 
\cite{ElTovbis}. Then \eqref{Gueq} becomes 
\begin{equation} \label{cond}
G\mu=\varphi.
\end{equation}
Our first observation is that \eqref{cond} is the Euler-Lagrange equation for the Green energy functional  
\begin{equation} \label{J0}
J_0(\mu) = \int G(\mu) d\mu-2\int\varphi d\mu, 
\end{equation}
which we want to minimize among all the Borel measures $\mu$ with $\supp\mu \subset \Gamma^+$. 
It is well known \cite{Helms} \cite{SaffTotik},  that (if $\Gamma^+$ is a compact 
subset of $\mathbb C^+$ of positive capacity, 
and $\varphi$ is continuous) the 
minimizing measure $\mu^*$ exists and is unique. Moreover,   $\mu^*$ satisfies equation \eqref{cond} quasi everywhere (q.e.),
{i.e., up to a possible set of zero capacity,}
on  $\supp\mu^*$ and the inequality
\begin{equation} \label{ineq}
G\mu^* \geq \varphi
\end{equation}
holds q.e.\ on $\Gamma^+\setminus\supp\mu^*$. Thus, our goals are: 
\begin{itemize}
\item  to modify the energy functional $J_0$ so that the corresponding  Euler-Lagrange equation for the minimizer $\mu^*$
will be \eqref{Gueq} instead of \eqref{cond};
\item to  prove  that under suitable conditions on $\varphi$ the Euler-Lagrange equation \eqref{Gueq} for $\mu^*$ holds not only on $\supp \mu^*$ but on the full $\Gamma^+$;
\item 
to clarify the meaning of the density $u^*$ of $\mu^*$ 
with respect to the reference measure $\lambda$.
\end{itemize}

\subsection{Minimization of modified energy functional}
We are able to satisfactorily answer these questions
in case $\Gamma^+ \subset \mathbb C^+$. Additional
complications arise in case $\Gamma^+ \cap 
\mathbb R \neq \emptyset$ (which is a very relevant
situation)
that we cannot overcome at this moment. 
Thus we restrict to $\Gamma^+$ being a compact
subset of $\mathbb C^+$.
The measure $\lambda$ is assumed to satisfy the
following mild condition.
\begin{assumption} \label{assumptiondm}
	We assume that $\supp(\lambda) = \Gamma^+$ with 
	$0 < \int_{\Gamma^+} d\lambda < +\infty$, and
	its Green potential $G(\lambda)$ 
	is bounded
	and continuous on $\mathbb C^+$.
\end{assumption}
It follows from Assumption \ref{assumptiondm} that
$\Gamma^+$ has positive logarithmic 
capacity \cite{SaffTotik}. 
As typical examples we may think of $\Gamma^+$ 
as a finite union of {piece-wise} smooth contours and
closed 2D regions (closure of connected open set), 
where $\lambda$ is arclength measure
on a smooth contour and $\lambda$
is the Lebesgue area measure on a 2D domain.

Now we introduce the energy functional that is
a modification of \eqref{J0}. 
\begin{definition}  \label{def:Jsigma} 
	For a continuous $\varphi : \Gamma^+ \to \mathbb R$ we
	define
	\begin{equation} \label{Jmu}	
	J_{\sigma}(\mu) 
	:= \begin{cases} \ds J_0(\mu)  + \int \sigma u^2 d\lambda,
	\quad & \begin{array}{ll} \text{if $\sigma \mu = \sigma u \lambda$ is absolutely} \\
	\text{continuous
		with respect to $\lambda$},
	\end{array} \\
	+\infty, & \text{ otherwise}.
	\end{cases}
	\end{equation}
\end{definition}

The first main result of the paper is the following 
Theorem \ref{solitonmain}, which  is proven 
in Section \ref{proofmain}.

\begin{theorem} \label{solitonmain}
Let $\Gamma^+ \subset \mathbb C^+$ be a  
compact with a measure $\lambda$ that satisfies 
Assumption \ref{assumptiondm}.
Suppose the functions  $\varphi : \Gamma^+ \to \mathbb R$ and  
$\sigma : \Gamma^+ \to [0,\infty)$
are continuous. Then the following hold.
\begin{enumerate}
	\item[\rm (a)]
	There is unique minimizing measure $\mu^*$ on
	$\Gamma^+$ 
	for the energy 	functional $J_{\sigma}$ that is defined in Definition \ref{def:Jsigma}.
	The measure	$\sigma \mu^*$ is absolutely continuous
	with respect to $\lambda$, that is, $\sigma u^* \lambda = \sigma \mu^*$ for some density $u^*\in L^1(\sigma
	\lambda)$.
	\item[\rm (b)] 
	If $u^*$ is such
	that $\sigma u^* \lambda = \sigma \mu^*$, then we have
 	\begin{equation} \label{maineq-phi}
 G \mu^* + \sigma u^* = \varphi \qquad
 	\mu^*\text{-a.e. }  \text{ on } \Gamma^+.
 	\end{equation}

	\item[\rm (c)] If $\varphi$
	is defined everywhere on $\mathbb C^+$ and is positive,
	continuous, and superharmonic there, then also 
	\begin{equation}  \label{maineq-phi-outside}
	G\mu^* = \varphi  \quad \text{ on } 
		\Gamma^+ \setminus \supp(\mu^*), \end{equation}
	while $G \mu^* \leq \varphi$ on $\mathbb C^+$.
\end{enumerate}
\end{theorem}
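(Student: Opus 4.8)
The plan is to read \eqref{Gueq} as the Euler--Lagrange equation of $J_\sigma$ and to extract all three statements from the variational structure: the direct method for existence, positive-definiteness of the Green kernel for uniqueness, and a domination argument for the global inequality. For part (a) I would first observe that $J_\sigma$ is bounded below and coercive in the total mass $\|\mu\|$: writing $\mu=\|\mu\|\nu$ with $\nu$ a probability measure gives $\int G\mu\,d\mu=\|\mu\|^2\int G\nu\,d\nu\ge \|\mu\|^2/\capa_G(\Gamma^+)$, where $\capa_G(\Gamma^+)>0$ is the Green capacity, while the term $-2\int\varphi\,d\mu$ is at worst linear in $\|\mu\|$ and $\int\sigma u^2\,d\lambda\ge0$; hence a minimizing sequence has bounded mass. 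Since all competitors sit on the fixed compact $\Gamma^+$, I extract a weak-$*$ limit $\mu^*$. Lower semicontinuity of $\int G\mu\,d\mu$ is the classical principle of descent (the kernel is nonnegative and lower semicontinuous on $\Gamma^+\times\Gamma^+$), and $\mu\mapsto-2\int\varphi\,d\mu$ is weak-$*$ continuous. Finiteness of $J_\sigma$ at, say, a multiple of $\lambda$ (whose Green energy is finite by Assumption \ref{assumptiondm}) shows the infimum is finite, so $\sigma\mu^*$ must be absolutely continuous. Uniqueness follows from strict convexity: the parallelogram identity gives $\tfrac12 J_\sigma(\mu_1)+\tfrac12 J_\sigma(\mu_2)-J_\sigma(\tfrac{\mu_1+\mu_2}2)=\tfrac14\int G(\mu_1-\mu_2)\,d(\mu_1-\mu_2)+\tfrac14\int\sigma(u_1-u_2)^2\,d\lambda\ge0$, strictly unless $\mu_1=\mu_2$.

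The main obstacle in (a) is the lower semicontinuity of the constraint term $\mu\mapsto\int\sigma u^2\,d\lambda$ (set to $+\infty$ when $\sigma\mu\not\ll\lambda$), because $u$ depends on $\mu$ in a discontinuous way. I would resolve this by the duality representation $\int\sigma u^2\,d\lambda=\sup_{h\in C(\Gamma^+)}\bigl(2\int\sigma h\,d\mu-\int\sigma h^2\,d\lambda\bigr)$, whose right-hand side equals $+\infty$ exactly when $\sigma\mu$ carries a singular part; each functional under the supremum is weak-$*$ continuous in $\mu$ (since $\sigma h$ is continuous), so the supremum is weak-$*$ lower semicontinuous, which is precisely what the direct method requires.

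For part (b), I would differentiate along the admissible segment $\mu_t=(1-t)\mu^*+t\nu$. Each of the three pieces of $J_\sigma(\mu_t)$ is a polynomial in $t$, so minimality yields the variational inequality $\int F\,d(\nu-\mu^*)\ge0$ for every admissible $\nu$, where $F:=G\mu^*+\sigma u^*-\varphi$. Taking $\nu=(1+s)\mu^*$ gives $\int F\,d\mu^*=0$; taking $\nu=(1+\varepsilon\mathbf 1_A)\mu^*$ gives $\int_A F\,d\mu^*\ge0$ for every Borel $A$, hence $F\ge0$ $\mu^*$-a.e.; together these force $F=0$ $\mu^*$-a.e., which is \eqref{maineq-phi}. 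The same inequality with $\nu=\mu^*+\varepsilon\mathbf 1_A\lambda$ gives $\int_A F\,d\lambda\ge0$ for all $A$, i.e.\ $F\ge0$ holds $\lambda$-a.e.\ on $\Gamma^+$, a fact I will reuse in (c). One must only check that each competitor is genuinely admissible, which is routine since $\sigma\nu\ll\lambda$ and the Green energies stay finite.

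For part (c), nonnegativity of $\sigma u^*$ turns \eqref{maineq-phi} into $G\mu^*\le\varphi$ $\mu^*$-a.e.; as $\varphi$ is superharmonic, the domination principle for Green potentials upgrades this to $G\mu^*\le\varphi$ on all of $\mathbb C^+$. On $\Gamma^+\setminus\supp(\mu^*)$ one has $\sigma u^*=0$ $\lambda$-a.e.\ (since $\mu^*$ puts no mass there), so the reused bound $F\ge0$ gives $G\mu^*\ge\varphi$ $\lambda$-a.e.\ on this relatively open set; combined with the global inequality, $G\mu^*=\varphi$ on a $\lambda$-full, hence (because $\supp\lambda=\Gamma^+$) dense subset, and continuity of $G\mu^*$ (harmonic off the support) together with continuity of $\varphi$ promotes this to equality everywhere on $\Gamma^+\setminus\supp(\mu^*)$. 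I expect the genuinely delicate points to be the duality/lower-semicontinuity argument in (a) and the passage from the $\mu^*$-a.e.\ inequality to the everywhere inequality in (c), namely the correct invocation of the domination principle.
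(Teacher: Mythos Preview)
Your proposal is correct, and parts (a)--(c) each work, but you take a somewhat different route from the paper at two key points.

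For the lower semicontinuity of $\mu\mapsto\int\sigma u^2\,d\lambda$ in part (a), the paper gives a direct argument: along a minimizing sequence with $\sigma\mu_k=\sigma u_k\lambda$, Cauchy--Schwarz yields $\int|\psi|\sigma^{1/2}\,d\mu_k\le(\int|\psi|^2\,d\lambda)^{1/2}(\int\sigma u_k^2\,d\lambda)^{1/2}$ for every continuous $\psi$; passing to the weak-$*$ limit and then testing with $\psi=\mathbf 1_A$ and $\psi=\min(\sigma^{1/2}u,M)$ shows that $\sigma^{1/2}\mu\ll\lambda$ and that $\int\sigma u^2\,d\lambda\le\liminf_k\int\sigma u_k^2\,d\lambda$. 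Your Fenchel-type duality $\int\sigma u^2\,d\lambda=\sup_{h\in C(\Gamma^+)}(2\int\sigma h\,d\mu-\int\sigma h^2\,d\lambda)$ achieves the same conclusion more abstractly; it is a clean way to encode both the finiteness criterion and the lsc in one line, at the cost of checking that the supremum is indeed $+\infty$ when $\sigma\mu$ has a singular part (which needs a small Urysohn/regularity argument).

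In part (c) the paper first upgrades the $\mu^*$-a.e.\ inequality $G\mu^*\le\varphi$ to a pointwise inequality on $\supp(\mu^*)$ via lower semicontinuity of $G\mu^*-\varphi$, and only then applies the maximum principle on $\mathbb C^+\setminus\supp(\mu^*)$; you instead invoke the domination principle for Green potentials directly from the $\mu^*$-a.e.\ inequality, which is legitimate since $\mu^*$ has finite Green energy and hence charges no polar sets. For the equality on $\Gamma^+\setminus\supp(\mu^*)$, the paper argues point-by-point, testing with the restriction of $\lambda$ to a small disk; your route---first deriving $G\mu^*+\sigma u^*\ge\varphi$ $\lambda$-a.e.\ on all of $\Gamma^+$ from the variation $\nu=\mu^*+\varepsilon\mathbf 1_A\lambda$, then using $\supp\lambda=\Gamma^+$ and continuity of both $G\mu^*$ (harmonic off $\supp\mu^*$) and $\varphi$---is a perfectly valid and slightly more global alternative. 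Part (b) is essentially the same as the paper's Lemma~\ref{lemma26}, just phrased through convex interpolation rather than additive perturbations.
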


The equation \eqref{maineq-phi} is the Euler-Lagrange
variational equality for the minimization problem.
Since $u^*$ is only defined $\mu^*$-a.e., it is in a way  natural
to have  \eqref{maineq-phi} only $\mu^*$-a.e.
The  equation \eqref{maineq-phi} is accompanied by  a 
variational inequality outside of the
support of $\mu^*$ (where $u^* = 0$) which says that
$G \mu^*  \geq \varphi$ on $\Gamma^+ \setminus \supp(\mu^*)$
up to a possible set of zero capacity.
The conditions on $\varphi$ in item (c) imply 
equality outside of the support,
and this includes the case $\varphi(z) =  \Im z$.
Under these conditions, the minimizer $\mu^*$ will be called  {\it weak solution} of the equation \eqref{Gueq}.

The arguments in the proof of Theorem \ref{solitonmain} lead to important characterization of $\supp\mu^*$.
 Let $\Omega$ denote the unbounded component of $\mathbb C^+\setminus \Gamma^+$. Then conditions (c), Theorem \ref{solitonmain} 
 together with some mild requirements on $\varphi(z)$ near infinity imply that
$\part \Omega \cap \Gamma^+ \subset \supp \mu^*$, see Proposition \ref{propGvareq2} for exact formulation. In particular, if $\Gamma^+$
is a collection of open  arcs (each arc has endpoints) then $\supp\mu^*=\Gamma^+$.

\begin{remark} \label{prop-cond}
In the assumptions of Theorem \ref{solitonmain} let us fix some $\G^+$ and $\varphi$
but allow $\s$ to vary. Then it is clear that
\be
J_0(\m_0^*)\leq J_0(\m_\s^*) \leq J_\s(\m_\s^*),
\ee
where $\m_\s^*$ denotes the minimizer of $J_\s$ for a given $\s$.
Thus, {\it the condensate $\s\equiv 0$ corresponds to the minimal energy} for given $\G^+$ and $\varphi$.
The converse statement, in general, is not true, as it follows from  
Example \ref{ex-circle} below.
We also observe that the condensate maximizes the value of  $\int_{\G^+}\varphi d\m^*_\s$ 
with  given $\G^+$ and $\varphi$ since, 
by the definition  \eqref{J0}, \eqref{Jmu}
of $J_{\sigma}$ 
\begin{align*}
	J_\s(\m_\s^*) & = \int_{\Gamma^+} G \mu_{\sigma}^*
		d\mu_{\sigma}^* - 2 \int_{\Gamma^+} \varphi d \mu_{\sigma}^*
		+ \int_{\Gamma^+} \sigma (u_{\sigma}^*)^2 d \lambda \\
		& = \int_{\Gamma^+} 
		\left( G \mu_{\sigma}^* + \sigma u_{\sigma}^*
		- 2 \varphi \right) d\mu_{\sigma}^* \\
	& = - \int_{\Gamma^+} \varphi d\mu^*_\sigma,
\end{align*}
where for the last line we used the identity 
\eqref{maineq-phi} that is valid a.e.\ on the support
of $\mu_{\sigma}^*$.  In particular, $\int_{\G^+}\varphi d\m^*_\s$ is related with the average
intensity of the fNLS soliton gas when $\varphi(z)=\Im z$, which is maximized in the case of the condensate.
\end{remark}

\subsection{Equality in variational condition}

In our second main result we give conditions that guarantee that the equation \eqref{maineq-phi}
is valid everywhere on $\supp(\mu^*)$ instead of being valid just $\mu^*$-a.e. 
We have two such conditions. The first condition is that $\sigma > 0$. Then it turns out
that $\mu^*$ has a continuous density as we show in part (a) of Theorem \ref{secondthm} and
\eqref{maineq-phi} is satisfied everywhere  on $\Gamma^+$
where $\sigma > 0$.

The second condition deals with the case when $\sigma = 0$
on $\Gamma^+$ or on part of $\Gamma^+$.
When $\sigma \equiv 0$ on $\Gamma^+$ then 
it is known from potential theory \cite{Ransford, SaffTotik} that 
the identity \eqref{maineq-phi} may fail on a subset $E$
of the support of $\mu^*$ of capacity zero. The
set $\Gamma^+$ is thin at the points in $E$ 
in the following sense, see \cite[Definition 3.8.1]{Ransford}.

\begin{definition} \label{thick}
	Let $S$ be a subset of $\mathbb C$ and let $z_0 \in 
	\mathbb C$. Then $S$ is thick (or non-thin) at
	$z_0$ if $z_0 \in \overline{S \setminus \{z_0\}}$ 
	and if, for every superharmonic function
	$u$ defined on a neighborhood of $z_0$,
	\[ \liminf_{z \to z_0 \atop z \in S \setminus \{ z_0\}} u(z) = u(z_0), \]
	Otherwise, $S$ is thin at $z_0$.
\end{definition}
Thus, in case $\sigma \equiv 0$ on $\Gamma^+$,
and $\Gamma^+$ is thick at all of its points, then
(1.10) holds on the full support of $\mu^*$,
and under the conditions of Theorem 1.2 (c),
the identity $G \mu^* = \varphi$ holds on the
full set $\Gamma^+$. 

A connected set with more than one point (for
example a contour) is thick at
all of its points. On the other hand, a countable set
is thin at every point. 

The notion of thickness is related 
to the solvability of the Dirichlet problem 
for harmonic functions.  
If $\Omega$ is a bounded open set, and $f$
is a continuous function on $ \partial \Omega$,
then the Dirichlet problem asks for a
continuous function $u$ on $\overline{\Omega}$ that
is harmonic in $\Omega$ and agrees with $f$ on
the boundary.  
The Dirichlet problem is solvable for
every continuous function on $\partial \Omega$ 
if and only if $\partial \Omega$ is thick at all
of its points, see e.g.\ \cite[Theorem 7.5.1]{ArmitageGardiner} or 
\cite[Appendix A.2, Theorem 2.1]{SaffTotik}. 

\begin{theorem} \label{secondthm}
	Under the general assumptions of Theorem \ref{solitonmain},
	let $\mu^*$ be the minimizer of $J_\sigma$ with density $u^*$.
\begin{enumerate}
	\item[\rm (a)] Let $S = \{ z \in \Gamma^+ \mid
	\sigma(z) > 0\}$. Then $G \mu^*$ is continuous
	on $S$, the density $u^*$ of $\mu^*$ is
	continuous on $S$ (after modifying it on
	a set of $\mu^*$-measure $0$, if necessary),
	and 
	\[ G \mu^* + \sigma u^* = \varphi \quad \text{ on } S.\]
	\item[\rm (b)] 
	Let $\varphi$ be positive, continuous, and
	superharmonic on $\mathbb C^+$.
	Let $\sigma$ be continuous on $\Gamma^+$, and
	$S_0 = \{ z \in \Gamma^+ \mid \sigma(z) = 0 \}$.
	Suppose $S_0$ is thick at $z_0 \in S_0$
	(see Definition \ref{thick}). 	Then
	\[ G\mu^*(z_0) = \varphi(z_0). \]
\end{enumerate}
\end{theorem}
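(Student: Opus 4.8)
The plan is to prove the two parts separately, since they rely on different mechanisms.

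For part (a), I would exploit the fact that on the set $S = \{\sigma > 0\}$ the Euler-Lagrange equation \eqref{maineq-phi} can be solved for $u^*$ algebraically. First I would establish that $G\mu^*$ is continuous on $S$. The key point is that $G\mu^*$ is automatically upper semicontinuous (being superharmonic on $\mathbb C^+$), so one direction is free; for lower semicontinuity I would use the variational equation together with the continuity of $\varphi$ and $\sigma$. More precisely, from \eqref{maineq-phi} we have $\sigma u^* = \varphi - G\mu^*$ holding $\mu^*$-a.e., so where $\sigma > 0$ we may write $u^* = (\varphi - G\mu^*)/\sigma$. The plan is to promote the $\mu^*$-a.e. identity to an everywhere identity on $S$ by a bootstrapping argument: one defines a candidate continuous density on $S$ via this formula and shows it must agree with $u^*$ almost everywhere, hence serves as a continuous representative. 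The main technical obstacle here is controlling $G\mu^*$ from below; I would try to use that the restriction of $\mu^*$ to a neighborhood has a density in $L^2(\sigma\lambda)$ (guaranteed by finiteness of the energy $J_\sigma(\mu^*) < \infty$), and that the Green potential of an $L^2$ density against a kernel with the required integrability is continuous. Establishing this continuity of the potential, rather than mere semicontinuity, is the delicate step.

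For part (b), the mechanism is potential-theoretic rather than algebraic. I would start from the global inequality $G\mu^* \le \varphi$ on $\mathbb C^+$ established in Theorem \ref{solitonmain}(c), together with the complementary variational inequality $G\mu^* \ge \varphi$ holding quasi-everywhere on $\Gamma^+ \setminus \supp(\mu^*)$, and the equality \eqref{maineq-phi} holding $\mu^*$-a.e. on the support. The goal is to upgrade these to equality at the specific point $z_0 \in S_0$ where $S_0$ is thick. The natural route is to consider the superharmonic function $\varphi - G\mu^*$, which is nonnegative on $\mathbb C^+$, and show it vanishes at $z_0$. Since $\varphi - G\mu^*$ equals $\sigma u^* = 0$ quasi-everywhere on $S_0$ (using $\sigma = 0$ there together with the variational conditions), the function is zero q.e. on $S_0$. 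Thickness of $S_0$ at $z_0$ is precisely the hypothesis needed to transfer this q.e.\ vanishing to the value at $z_0$: by Definition \ref{thick}, for the superharmonic function $\varphi - G\mu^*$ we have
\[
(\varphi - G\mu^*)(z_0) = \liminf_{z \to z_0,\ z \in S_0 \setminus \{z_0\}} (\varphi - G\mu^*)(z).
\]
The main obstacle is justifying that the limit inferior over $S_0$ equals zero, which requires knowing that $\varphi - G\mu^*$ vanishes not merely q.e.\ but along a sufficiently rich subset approaching $z_0$; here I would invoke that a set of zero capacity is thin at each of its points, so removing the exceptional set does not destroy thickness of $S_0$ at $z_0$, and the $\liminf$ along the good points of $S_0$ is therefore $\le 0$. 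Combined with nonnegativity of $\varphi - G\mu^*$, this forces the value at $z_0$ to be exactly zero, giving $G\mu^*(z_0) = \varphi(z_0)$.

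The harder of the two parts is (a): the potential-theoretic transfer in (b) is essentially the definition of thickness applied to a function whose sign and q.e.\ vanishing we already control, whereas in (a) one must genuinely upgrade an a.e.\ relation to a continuous identity, and the continuity of the Green potential of the $L^2$ density is where the real analytic work lies.
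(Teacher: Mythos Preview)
Your plan for part (a) has two genuine gaps. First, the semicontinuity direction is reversed: a Green potential $G\mu^*$ is superharmonic and hence automatically \emph{lower} semicontinuous, not upper semicontinuous. The nontrivial direction is upper semicontinuity. Second, and more seriously, the $L^2(\sigma\lambda)$ integrability of $u^*$ is too weak to force continuity of the local Green potential; there is no general principle that an $L^2$ density gives a continuous potential without further structural input. The paper's argument uses a stronger and much simpler observation that you are missing: from the Euler--Lagrange relation $\sigma u^* \le \varphi$ holding $\mu^*$-a.e., and from $\sigma \ge C_0 > 0$ on a small disk $D$ around $z_0 \in S$, one gets that $u^*$ is \emph{bounded} on $D\cap\Gamma^+$, so the restriction $\mu_0^*$ of $\mu^*$ to $D\cap\Gamma^+$ satisfies $\mu_0^* \le C\lambda$. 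Now Assumption~\ref{assumptiondm} (continuity of $G\lambda$) together with the elementary fact that $\nu \le \mu$ and $G\mu$ continuous imply $G\nu$ continuous (Lemma~\ref{lemma36}) gives continuity of $G\mu_0^*$, and hence of $G\mu^*$, at $z_0$. Everything else follows by redefining $u^* := (\varphi - G\mu^*)/\sigma$ on $S$.

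For part (b), your outline is close in spirit but has a technical slip and hides the real work. The function $\varphi - G\mu^*$ is the difference of two superharmonic functions and is not itself superharmonic near points of $\supp(\mu^*)$, so Definition~\ref{thick} cannot be applied to it directly; you should apply thickness to $G\mu^*$ alone. More importantly, your assertion that $\varphi - G\mu^* = 0$ quasi-everywhere on $S_0$ is exactly what needs proof: Theorem~\ref{solitonmain}(b) only gives equality $\mu^*$-a.e., and a $\mu^*$-null set can have positive capacity. To upgrade to q.e.\ equality on $S_0$ you must run the paper's test-measure argument: if $G\mu^*(z_0) \le \varphi(z_0) - \delta$, then $A = \{G\mu^* \le \varphi - \delta\}$ is a fine neighborhood of $z_0$, so $A\cap S_0$ is non-polar by thickness, and a finite-energy measure $\nu$ supported there has $J_\sigma(\nu) < \infty$ (since $\sigma = 0$ on $S_0$) and $\int(G\mu^* + \sigma u^* - \varphi)\,d\nu \le -\delta$, contradicting the variational inequality of Lemma~\ref{lemma26}(a). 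This variational step, not the thickness definition itself, is the heart of the proof.
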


Combining parts (a) and (b) of Theorem \ref{secondthm} we get the following.

\begin{corollary} \label{cor-eq-u}
	If the zero set $S_0$ of  $\sigma$ is thick at each
	of its points (and $\varphi$ is positive, continuous,
	and superharmonic) then
	\[ G\mu^* + \sigma u^* = \varphi \] 
	holds  on all of $\Gamma^+$. 
	This holds in particular if the zero set is empty,
	or if it is a connected set with more than one 
	point, or a union of such sets. 
\end{corollary}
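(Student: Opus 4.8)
The plan is to decompose $\Gamma^+$ according to the sign of $\sigma$ and apply the two parts of Theorem \ref{secondthm} on the respective pieces. Write $\Gamma^+ = S \cup S_0$ as the disjoint union of $S = \{z \in \Gamma^+ \mid \sigma(z) > 0\}$ and $S_0 = \{z \in \Gamma^+ \mid \sigma(z) = 0\}$. On $S$ the identity from Theorem \ref{secondthm}(a) already gives $G\mu^* + \sigma u^* = \varphi$ pointwise (together with continuity of $u^*$ on $S$ after the indicated modification on a $\mu^*$-null set), so nothing further is needed there.

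It remains to treat an arbitrary $z_0 \in S_0$. Since $\sigma(z_0) = 0$, the term $\sigma u^*$ vanishes at $z_0$, so the desired identity reduces to $G\mu^*(z_0) = \varphi(z_0)$. This is exactly the conclusion of Theorem \ref{secondthm}(b), whose hypotheses are met: $\varphi$ is positive, continuous and superharmonic on $\mathbb C^+$ by assumption, $\sigma$ is continuous, and $S_0$ is thick at $z_0$ by hypothesis. Combining the two pieces, $G\mu^* + \sigma u^* = \varphi$ holds at every point of $\Gamma^+ = S \cup S_0$.

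For the special cases I would verify the thickness hypothesis directly. If $S_0 = \varnothing$ then $\sigma > 0$ on all of $\Gamma^+$ and the identity follows from part (a) alone, the thickness requirement being vacuous. If $S_0$ is connected with more than one point, then it is thick at each of its points, as recorded in the discussion following Definition \ref{thick}. For the case of a union $S_0 = \bigcup_i C_i$ of such connected sets, I would invoke monotonicity of thickness: if $A \subset B$ and $A$ is thick at $z_0$, then so is $B$. This is immediate from the definition, since $\overline{A \setminus \{z_0\}} \subset \overline{B \setminus \{z_0\}}$ gives $z_0 \in \overline{B \setminus \{z_0\}}$, while for every superharmonic $u$ near $z_0$ the chain $u(z_0) \leq \liminf_{z \to z_0,\, z \in B \setminus \{z_0\}} u(z) \leq \liminf_{z \to z_0,\, z \in A \setminus \{z_0\}} u(z) = u(z_0)$ holds, the left inequality being lower semicontinuity of $u$ and the right one the inclusion of admissible sequences. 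Any $z_0 \in S_0$ lies in some $C_i$ with more than one point, which is thick at $z_0$; hence $S_0$ is thick at $z_0$ as well.

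Since the statement is obtained by gluing Theorem \ref{secondthm}(a) and (b), there is no serious obstacle; the only point requiring a little care is the union case, where one must pass through the monotonicity of thickness rather than apply the ``connected with more than one point'' criterion to $S_0$ itself. The pointwise vanishing of $\sigma u^*$ on $S_0$ is immediate from $\sigma = 0$ there, so the density $u^*$, which is controlled and continuous only on $S$, is never evaluated on $S_0$.
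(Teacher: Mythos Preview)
Your proof is correct and follows exactly the approach the paper intends: the corollary is stated immediately after Theorem \ref{secondthm} with the remark that it is obtained by combining parts (a) and (b), and the paper gives no further argument. Your write-up supplies the details the paper omits, including the monotonicity of thickness needed for the ``union of connected sets'' case, which is a valid and clean justification.
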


In the case where the zero set of $\sigma$
has an isolated point, one 
may encounter the situation where $\sigma(a) = 0$
and $u^*(a) = \infty$ 
and then $\sigma u^*$ is not well defined at this
point. This happens in the following example.

\begin{example}\label{ex-Sigma-nontiv}
	Suppose $\Gamma^+$ is a  bounded smooth arc
	in $\mathbb C^+$ with arclength measure $\lambda$.
	Take a point $a \in \Gamma^+$ and consider
	\[ d \mu^*(z) = c |z-a|^{-1/2}  d \lambda(z),
		\qquad z \in \Gamma^+ \]
	with $c > 0$.
	This is a finite measure with a Green
	potential that is bounded and continuous.
	For small  enough $c >0$ we have $G \mu^* < \varphi$
	on $\Gamma^+$.  Then
	\[ \sigma(z) := c^{-1} |z-a|^{1/2} 
		\left(\varphi(z) - G \mu^*(z) \right),
	\qquad z \in \Gamma^+, \]
	is non-negative and continuous on $\Gamma^+$.
	
	The measure 	$\mu^*$ is the minimizer of $J_{\sigma}$ 
	for this $\sigma$, with the density
	\[ u^*(z) = c |z-a|^{-1/2}. \]
	By construction, the equality 
	$G \mu^* + \sigma u^* = \varphi$ holds on $\Gamma^* \setminus \{a\}$. At $z=a$ we have $\sigma(a) = 0$ and $u^*(a) = +\infty$
	and the product $\sigma u^*$ is not well-defined at $a$.
	
	To have the equality at $z=a$ as well, we
	need to interpret the product $\sigma(a) u^*(a)$ in
	this situation as $\varphi(a) - G\mu^*(a) > 0$.
\end{example}

Neither of Theorem \ref{solitonmain}, part (c), Theorem \ref{secondthm}, part (b) or  Corollary \ref{cor-eq-u} covers
the corresponding to \eqref{dr_soliton_gas2} case $\varphi(z) = -4 \Im z \Re z = -2 \Im (z^2)$,
 since this $\varphi(z)$, although  harmonic,  takes 
both positive and negative
values in $\mathbb C^+$. 
Nevertheless, in Theorem \ref{theo-eq2}, Section \ref{sect-sol-dr2}, we  construct the
 solution to \eqref{dr_soliton_gas2}  by representing the right hand side of   \eqref{dr_soliton_gas2} as a 
 difference of two continuous, positive and superharmonic in $\mathbb C^+$ functions, 
 to which we can apply the statements mentioned at the beginning of this paragraph.
 
 \subsection{Outline}

Here is a brief description of the rest of the paper. 
In  Section \ref{sect-backg} we give a concise  presentation of  the ideas leading to NDR  
\eqref{dr_soliton_gas1}-\eqref{dr_soliton_gas2}
for the fNLS soliton gas. In fact, we will obtain there the more general NDR \eqref{dr_breather_gas1}-\eqref{dr_breather_gas2}
for the fNLS breather gas, for which soliton gas is a particular case. We also  describe special cases of soliton gases,
such as fNLS bound state soliton gas and   fNLS soliton condensate. 
It is worth mentioning here that the methods of potential theory, used in this paper, can be applied to the breather
gas as well. The authors have obtained partial results in this direction that they hope to complete at a later time.
We also observe there that fNLS bound state soliton gas
can be seen as essentially equivalent to the KdV soliton gas, which is the first example of a soliton gas that was obtained
in \cite{El2003}. Thus, correspondingly modified Theorems~ \ref{solitonmain}, \ref{secondthm}, \ref{theo-eq2}, Corollary 
\ref{cor-eq-u} and  Propositions \ref{propGvareq2}, \ref{prop-cond}, as well as some results of Section \ref{sect-prop-mu},  
are applicable to the KdV soliton gas.

Sections \ref{proofmain}-\ref{sec-proof-t2}
are devoted to the proofs of Theorems \ref{solitonmain}
and \ref{secondthm}, respectively.
These theorems  are also used to prove the existence of a solution to equation 
\eqref{dr_soliton_gas2} in the Subsection \ref{sect-sol-dr2}. 
Properties of the minimizer $\mu^*$ of of $J_\sigma$, such as its support and smoothness under various 
additional assumptions on $\G^+$ and $\sigma$ are discussed in Section \ref{sect-prop-mu}. 
For example, we show there that if $G\mu^*=\varphi$ on the boundary $\part \Omega\subset \G^+$ of some region $\O$
then $\supp\mu^*\cap\Omega=\emptyset$, see Lemma \ref{lem-s=0}. In particular, in the case of the 
soliton condensate ($\sigma \equiv 0$ on $\G^+$), the support
$\supp \mu^*\subset \partial \G^+$
if $\Gamma^+$ is a 2D compact region.
Assume now that $\G^+$ is a finite collection
of piece-wise $C^\infty$ smooth curves for soliton condensate \eqref{dr_soliton_gas1}. 
Then, according to Corollary \ref{cor-smooth1D},
the solution $u$ is $C^\infty$ smooth on $\G^+$ except for small neighborhoods of the points
of non smoothness of $\G^+$ (which include all the endpoints of $\G^+$).  In Lemmas \ref{lem-sig-1} and \ref{lem-sig-0} 
we describe smoothness of $u$ in the general soliton gas with $\s\geq 0$. Finally, in Section \ref{sec-quasimom-vert-cond} we
prove that in the case of a bound state condensate the solution $u(z)$ to 
\eqref{dr_soliton_gas1} is proportional to the density of the quasimomentum meromorphic differential
for the hyperelliptic Riemann surface $\Rscr$ defined by $\G^+\cup\overline{\G^+}$
respectively, see Theorem \ref{lem-boun-cond}. We then discuss extension of these results to 
the KdV soliton gas. 

\section{Background}\label{sect-backg}

Generally speaking, 
solitons and breathers are localized solutions of integrable systems that can be viewed as “particles” of
complex statistical objects called soliton and breather gases.  The nontrivial relation between the  integrability and randomness
in these gases falls within the framework of ``integrable turbulence'',  introduced by V. Zakharov  in \cite{Za09}.
The latter was  motivated by  the complexity of many  nonlinear wave phenomena in physical systems that can be   modeled by integrable equations. 
In view of the growing evidence of  wide spread presence of the “integrable” gases 
in fluids and nonlinear optical media, see \cite{ElTovbis} and references therein, they present 
a fundamental interest for nonlinear science. 

\subsection{ fNLS  soliton  gas}\label{sec-fNLS-gas}

Let us now briefly describe the relation between the spectral theory for fNLS soliton gas  and equations (NDR)
 \eqref{dr_soliton_gas1}-\eqref{dr_soliton_gas2}. We then show that the NDR for the Kortweg - de Vries
 (KdV) equations are closely related with  \eqref{dr_soliton_gas1}-\eqref{dr_soliton_gas2} when $\G^+\subset i\R^+$. 
The fNLS 
has the form
\begin{equation}  \label{NLS}
i  \psi_t +  \psi_{xx} +2 |\psi|^2 \psi=0,
\end{equation}
where $x,t\in \R$ are the space-time variables and  $\psi:\R^2 \to \C$ is the unknown function. 
The  simplest solution of  equation \eqref{NLS} 
is a plane wave 
 \begin{equation}\label{pw}
\psi= q e^{2iq^2  t}, 
\end{equation}
where $q>0$ is the amplitude of the wave.

It is well known that the fNLS is an integrable equation \cite{ZS}; the Cauchy (initial value) problem for \eqref{NLS}  can
be solved using the inverse scattering transform (IST) method for different classes of initial data, also known as potentials.
The scattering transform connects a given potential with scattering data expressed in terms of the spectral variable $z\in\C$.
In particular, the scattering data consisting of one pair of spectral points $z=a\pm ib$, where $b>0$, and a (norming) constant $c\in \C$,
defines the famous  soliton solution 
\begin{equation}\label{nls_soliton}
\psi_S (x,t)= 2ib \, \hbox{sech}[2b(x+4at-x_0)]e^{-2i(ax + 2(a^2-b^2)t)+i\phi_0},
   \end{equation}
to the fNLS, where $c$ defines its initial position $x_0$ and the initial phase $\phi_0$.
The soliton \eqref{nls_soliton} 
represents a spatially localized traveling wave  
 (pulse)  on a zero background. 
It is characterized by two independent papameters:  $b=\Im z$ determines the soliton amplitude $2b$ and
$a=\Re z$ determines its velocity $s=-4 a$. A scattering data that consists of several points
$z_j \in \C^+ $ (and their complex conjugates), $j\in \N$, together with their norming constants corresponds to the multi-soliton
solutions. Assuming that originally (at $t=0$) the centers of individual solitons are far from each other, 
we can represent the fNLS time evolution of a multi-soliton solution as propagation and interaction of the 
individual solitons.
It is well known 
that the interaction of solitons in multi-soliton fNLS solutions reduces to only two-soliton elastic
collisions, where
the faster soliton (corresponding to $z_m$) gets a forward shift \cite{ZS}
$$
\Delta_{mj}=\frac1{\Im(z_m)}\log\left|\frac{z_m-\overline{z}_j}{z_m-z_j}
\right|,\quad \Re(z_m)>\Re(z_j),
$$
and the slower ``$z_j$-soliton'' is shifted backwards by $-\Delta_{mj}$.

Suppose now we have a ``gas'' of solitons \eqref{nls_soliton} whose  spectral characteristics $z$ are 
distributed over a compact $\G^+\subset \C^+$ according to some non negative measure $\m$. Assume also that 
the centers of these solitons are distributed uniformly on $\R$ and that $\mu(\G^+)$ is small, i.e the gas 
is diluted. Let us consider the speed of the trial $z$ - soliton in the gas. Since it undergoes rare but
sustained collisions with other solitons, the speed $s_0(z)=-4\Re z$ of a free solution must be modified as
\begin{equation} \label{mod-speed}
s(z)=s_0(z) + \frac{1}{\Im z} \int_{\G^+}\log \le|\frac{w-\bar z}{w-z}\ri|[s_0(z)-s_0(w)]d\mu(w).
\end{equation}
Similar modified speed formula was first obtain by V. Zakharov \cite{Zakharov} in the context of the KdV
equation. How  can one find $s(z)$   without the assumption of the diluted gas, that is, 
when $\mu(\G^+)=O(1)$? The answer is given by the integral equation
\begin{equation} \label{eq-state}
s(z)=s_0(z) + \frac{1}{\Im z} \int_{\Gamma^+}\log \left|\frac{w-\bar z}{w-z}\right|[s(z)-s(w)]d\mu(w)
\end{equation}
for $s(z)$, known as equation of state for the soliton gas, which was first obtained in \cite{ElKamch} using 
purely physical reasoning. Similar equation in the KdV context was obtained in \cite{El2003}. 
In this equation $s(z)$ has the meaning of the speed of the ``element of the gas'' associated with the spectral
parameter $z$ (note that when $\mu(\G^+)=O(1)$ we cannot distinguish individual solitons).
If we now assume a (weak)  dependence of  $s$ and $u$ on space time parameters $x,t$  
(where $d\mu=u d\lambda$  with $\lambda$ being the Lebesgue measure) that  
occurs on much larger spatiotemporal scales than the typical scales
of the oscillations corresponding to individual solitons,  then we complement the equation of state 
\eqref{eq-state} by the continuity equation for the density of states
\begin{equation} \label{kinet}
\part_t u+\part_x (su)=0, 
\end{equation}
which was first suggested in  \cite{ElKamch} and derived in \cite{ElTovbis}. Equations \eqref{eq-state},
\eqref{kinet} form the kinetic equation for dynamic (non-equilibrium) fNLS soliton gas. 
The kinetic equation for the KdV soliton gas was derived in  \cite{El2003}.
It is remarkable that 
recently the kinetic equation having
similar structure 
was derived in the framework of the “generalized hydrodynamics”
for quantum many-body integrable systems, see, for example, \cite{DYC, DSY, VY}. 

It is interesting to observe that \eqref{eq-state} is a direct consequence of
\eqref{dr_soliton_gas1}-\eqref{dr_soliton_gas2}, where $s(z)= \frac{v(z)}{u(z)}$.
Indeed, multiplying  \eqref{dr_soliton_gas1} by $s(z)$,
substituting $v(z)=s(z)u(z)$ into  \eqref{dr_soliton_gas2}, subtracting the second equation from the first one
and dividing both parts by $\Im z$ we obtain exactly \eqref{eq-state}.

A mathematical albeit formal (i.e., without error estimates)  derivation of the equation of state \eqref{eq-state} 
was presented in the recent
paper \cite{ElTovbis}. The first step is derivation of equations \eqref{dr_soliton_gas1}-\eqref{dr_soliton_gas2}, which describe
the density of states $u$ and its temporal analog $v$.
The derivation  is based on the idea of  thermodynamic limit for a family of
finite gap solutions of the fNLS, which was originally developed for the KdV equation in \cite{El2003}. 
Finite-gap solutions are quasi-periodic functions in $x,t$ that spectrally can be represented by a finite
number of Schwarz symmetrical arcs (bands) on the complex $z$ plane. Here Schwarz symmetry means that either
a band $\g$ coincides with its Schwarz symmetrical image $\bar \g$ or if $\g$ is a band then  $\bar\g$
is another band. Assume additionally that there is a complex 
constant (initial phase) associated with each band that also respects the Schwarz symmetry, i.e.,  Schwarz symmetrical bands 
have Schwarz symmetrical phases. Given a finite set of Schwarz symmetrical bands with the corresponding phases,
a finite-gap solution to the fNLS can be written explicitly in terms of the Riemann Theta functions
on the hyperelliptic Riemann surface $\mathfrak R$, where  the bands are the branchcuts of $\mathfrak R$, see, for example, \cite{Belokolos}.

For convenience of the further exposition, we will consider $\mathfrak R$ of the genus $2N$, where the genus of $\mathfrak R$ 
is the number of bands minus one. The one exceptional band $\g_0$ will be crossing $\R$, whereas the remaining $N$
bands $\g_j\subset \C^+$, $j=1,\dots,N$, and their Schwarz symmetrical $\g_{-j}:=\bar\g_j\subset \C^-$. 
It was shown in \cite{ElTovbis} that 
the wavenumbers $k_j$, $\tilde k_j$ and the frequencies $\o_j$, $\tilde \o_j$ of a quasi-periodic
finite gap solution $\psi_{2N}$ determined by $\mathfrak R$ can be expressed as
\begin{eqnarray}
k_j &=& -\oint_{\mathrm{A}_j}dp, \quad \o_j = -\oint_{\mathrm{A}_j}dq, \quad j=1, \dots, N, \label{kdp}\\
\tilde k_j &=& \oint_{\mathrm{B}_j}dp, \quad \tilde \o_j = \oint_{\mathrm{B}_j}dq, \quad j = 1, \dots, N, \label{omdq}
\end{eqnarray}
where the cycles $A_j,B_j$ are shown on Figure \ref{Fig:Cont}. Here
$dp(z)$ and $dq(z)$, known as the quasimomentum and quasienergy differentials,  are 
meromorphic differentials on $\mathfrak{R}$ with the only poles at $z=\infty$ on both sheets. These differentials 
are real normalized (all the periods of $dp,dq$ are real) and 
are  (uniquely) 
defined 
(see e.g. \cite{ForLee}, \cite{bertolatovbis2015})
by local expansions 
\begin{equation}\label{pq}
dp \sim \pm 1 + \mathcal{O}(z^{-2}) , \qquad dq \sim \pm 4z+\mathcal{O}(z^{-2}) 
\end{equation}
near $z=\infty$ on the main and second sheet respectively.

\begin{figure}
\centering
\includegraphics[width= 0.5 \linewidth]{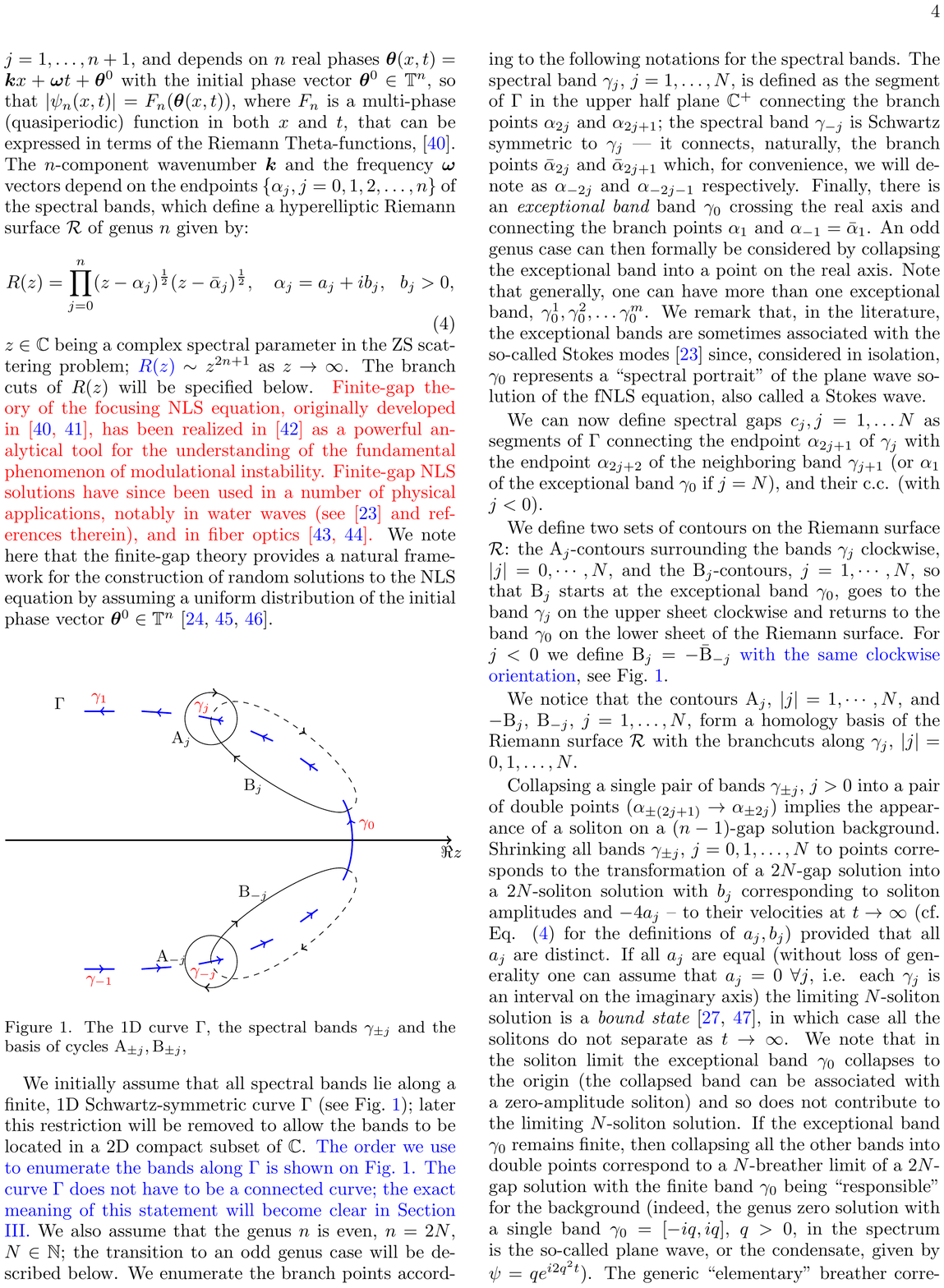}
 \caption{The spectral bands $\g_{\pm j}$ and the cycles $\mathrm{A}_{\pm j}, \mathrm{B}_{\pm j}.$
 The 1D Schwarz symmetrical curve $\G$ consists of the bands $\g_{\pm j}$, $j=0,\dots,N$, and gaps between the bands (the gaps are 
 not shown on this figure).}
 \label{Fig:Cont}
\end{figure}

We shall call the special set of wavenumbers and frequencies defined by  \eqref{kdp}, \eqref{omdq} {\it the  fundamental wavenumber-frequency set}. 
 We note that the wavenumbers and frequencies defined by  \eqref{kdp} and those defined by \eqref{omdq} are of essentially different nature:   in the limit of $\g_j$  shrinking to a point,  we have
 \begin{equation}
  k_j, \o_j \to 0,  \quad \tilde k_j, \tilde \o_j = \mathcal{O}(1),  \qquad j=1, \dots, N,\label{breather_lim} 
  \end{equation}
see  \cite{ElTovbis}.  Motivated by these properties, $k_j,\o_j$  are called
{\it solitonic  wavenumbers and  frequencies } whereas 
the remaining 
 $\tilde k_j$, $\tilde \o_j$  are called {\it carrier  wavenumbers and  frequencies}.
 
The standard normalized holomorphic differentials $w_j$ of  $\mathfrak{R}$ are defined by
\begin{equation} \label{D-P}
w_j=[P_j(z)/R(z)] dz, \quad \oint_{\mathrm{A_i}}w_j = \delta_{ij}, \quad i, j= \pm 1, \dots, \pm N,
\end{equation}
where the polynomials 
\begin{equation} \label{Pj}
P_j(z)=\k_{j,1}z^{2N-1}+\k_{j,2}z^{2M-2}+ \dots +\k_{j,2N}
\end{equation}
 have complex coefficients and the radical
\begin{equation}
\label{rsurf}
R(z)=\prod_{k=1}^{2N+1}(z-\a_k)^\hf(z-\bar\a_k)^\hf.
\end{equation}
defines the hyperelliptic surface  $\mathfrak{R}$, i.e, the product runs over all the endpoints (of the bands) $\a_k\in\C^+$.
It was shown in \cite{ElTovbis} that the  solitonic wavenumbers  and frequencies satisfy the  systems  
\begin{align} 
\sum_{|m|=1}^{N} k_m\Im\oint_{\mathrm{B}_m}{{P_j(\z)d\z}\over{R(\z)}} &={4}\pi \Re\k_{j,1},\nonumber \\
\sum_{|m|=1}^{N} \o_m\Im\oint_{\mathrm{B}_m}{{P_j(\z)d\z} 
\over{R(\z)}} & ={8}\pi \Re\left(\k_{j,1}\sum_{k=1}^{2N+1}\Re\a_k +\k_{j,2}\right), \nonumber \\
& \qquad  |j|= 1,\dots, N, \label{WFR}
 \end{align}
where the latter summation is taken over all the endpoints in $\C^+$. 
We call  \eqref{WFR} the solitonic nonlinear dispersion relations (NDR).
Indeed, the NDR
 indirectly connect (through the Riemann surface $\mathfrak{R}$) 
the solitonic wavenumbers and frequencies of the finite gap solution  $\psi_{2N}$,
i.e., \eqref{WFR} represents  nonlinear dispersion relations. 

Equations \eqref{WFR} together with \eqref{breather_lim} is our starting point  for deriving 
equations \eqref{dr_soliton_gas1}-\eqref{dr_soliton_gas2}. Before describing the derivation,
we want to point out that the matrix of the systems  \eqref{WFR} is negative-definite and, 
therefore, each of the systems
\eqref{dr_soliton_gas1}-\eqref{dr_soliton_gas2} has a unique solution.
The negative-definiteness of the matrix of the systems   \eqref{dr_soliton_gas1}-\eqref{dr_soliton_gas2}
follows from the properties of the the Riemann period matrix $\t$ of the Riemann surface $\mathfrak{R}$
( $\Im \t$ is positive definite).

Suppose now that we start shrinking each band to a point. Then we will be taking the finite gap solution
to its multi-soliton solution  limit, where the phases should be transformed into the corresponding norming constants.
The idea of thermodynamical limit consists of increasing the number $2N+1$ of bands  simultaneously with shrinking the size $2\d_j$
of each  band $\g_j$ (except, possibly, $\g_0$) at  some exponential rate with respect to $N$, so that the centers $z_j$ of the bands located in $\C^+$ 
will be filling a certain compact $\G^+\subset \C^+$ with some limiting density $\phi(z)$. 
Moreover, we assume  the distance between any bands to be 
much larger than the size of the bands. Under these assumptions one can show that
 the leading order behavior of the coefficients of the linear system \eqref{WFR} is given by
\begin{equation} \label{ass-coeff}
\oint_{\tilde{\mathrm{\bf B}}_m}{{P_j(\z)d\z}\over{R(\z)}}=\frac{1}{i\pi}
\le[ \log\frac{R_0(z_j)R_0(z_m)+z_jz_m+\d_0^2}
{R_0(z_j)R_0(\bar z_m)+z_j\bar z_m+\d_0^2}
- \log \frac{z_m-z_j}{z_m-\bar z_j}\ri] 
\end{equation}
when $m\neq j$ and 
\[ \oint_{\tilde{\mathrm{ B}}_j}{{P_j(\z)d\z}\over{R(\z)}}=i\frac{2\log\d_j}{\pi},
\]
where $\tilde{\mathrm{B}}_m =\mathrm{ B}_m + \mathrm{ B}_{-m}$,
$2\d_0$
is the distance between the endpoints of the exceptional arc $\g_0$ 
and
$R_0(z):=\sqrt{z^2+\d_0^2}\sim z$ as $z\to \infty$ (here WLOG we assume $z_0=0$).
 The imaginary part of  \eqref{ass-coeff} provides 
the expression for the 
 kernel of the integral equations for $u(z)$ in the case of the fNLS breather gas 
 \begin{multline} \label{dr_breather_gas1}
\frac 1{\pi} 
\int_{\G^+}
\le[\log\le| \frac{w-\bar z}{w-z}\ri|+ \log\le|\frac{R_0(z)R_0(w)+z w +\d_0^2}
{R_0(\bar z)R_0(w)+\bar z w+\d_0^2}\ri|\ri]   u(w) d\lambda(w)
+\sigma(z)u(z) \\
= \Im R_0(z), 
\end{multline}
whereas the second formula gives rise to
 the secular term $\s(z) u(z)$ in \eqref{dr_breather_gas1},
 where $\s(z)$ is defined by 
$\phi(z)$ and by the rate the bands shrink  near $z$.
  The same holds for the integral equation
 for $v(z)$:
\begin{multline}  \label{dr_breather_gas2}
\frac 1{\pi} \int_{\Gamma^+}
\left[\log\left| \frac{w-\bar z}{w-z}\right|+\log\left|\frac{R_0(z)R_0(w)+ z w+\delta_0^2}
{R_0(\bar z)R_0(w)+ \bar z w+\delta_0^2}\right|\right] v(w) d\lambda(w)
+\sigma(z)v(z) \\
= - 2\Im[z R_0(z)].
\end{multline}
 
The breather gas is obtained when in the thermodynamic limit all the bands except $\gamma_0$ are shrinking to points while
the exceptional band $\gamma_0$ approaches some limiting position as $N \to \infty$, where the endpoints of $\gamma_0$
approach $\pm i\delta_0$ respectively. Being considered alone, the limiting spectral band $\gamma_0$ corresponds
to the plane wave solution \eqref{pw} with $q=\delta_0$. The band $\gamma_0$ together with  Schwarz symmetrical points
of discrete spectrum $z, \bar z$
correspond to a soliton on the plane wave (carrier) background, also known as 
a breather. It is  remarkable that the kernel in the integral equations \eqref{dr_breather_gas1}-\eqref{dr_breather_gas2}, 
being divided by $\Im R_0(z)$, provides an elegant expression for the ``position shift'' of two interacting 
breathers; some considerably more  involved expressions for this phase shift were recently obtained in 
\cite{Gino1, Gino2, Gelash}.
Therefore, equations 
 \eqref{dr_breather_gas1}-\eqref{dr_breather_gas2} 
represent nonlinear dispersive relations for the breather gas.
It is easy to check that equations   \eqref{dr_breather_gas1}-\eqref{dr_breather_gas2}
coincide with  \eqref{dr_soliton_gas1}-\eqref{dr_soliton_gas2}  in  the limit $\delta_0\to 0$.
Thus, soliton gas can be considered as a particular case of the breather gas, see \cite{ElTovbis} for details. 

In the case of subexponential  rate of shrinking of bands $\g_j$ in the thermodynamic limit, the function 
$\s(z)$ turns to be zero and we obtain a breather (or soliton, if $\d_0\ra 0$) condensate (\cite{ElTovbis}).
As it was mentioned in Remark \ref{prop-cond}, the term ``condensate'' reflects the fact that for a 
given $\G^+$ and $\varphi(z)=\Im z$ the energy $J_\s(\m^*_\s)$ is minimized when $\s\equiv 0$ on $\G^+$.

Consider a sequence of atomic, possibly signed
measures $\m_N$ with weights 
\begin{equation}
u_j=\frac{\phi(z_j) k_j}{2\pi}
\end{equation}
at each $z_j$, $j=1,\dots,N$. 
Assuming that the sequence $\{\m_N\}_1^\infty$ weakly converges  to some
measure $d\mu = ud\lambda$ on $\mathbb C^+$, we obtain integral  equation  \eqref{dr_soliton_gas1}
as the
thermodynamic limit of the first equation \eqref{WFR} (here $\gamma_0$ also shrinks to a point $z_0=0$). 
Equation \eqref{dr_soliton_gas2} can be obtained from the second  \eqref{WFR} equation similarly.

We want to emphasize that the existence and uniqueness of solution of  \eqref{WFR} do not  imply the 
existence and uniqueness of solutions  \eqref{dr_soliton_gas1}-\eqref{dr_soliton_gas2} and, 
what is especially important, provides no information related to the requirement $u(z)\geq 0$ on $\Gamma^+$.
The aim of the present paper is to address these questions.

\subsection{  Bound state fNLS  and KdV gases}\label{sec-quasimom-vert}

Soliton gas is called a bound state gas if $\Gamma^+$ is a subset of a vertical line $\Re z =c$, where $c$ is a constant.
The terminology comes from the fact that all the solitons in a multisoliton fNLS solution with the 
(discrete) spectrum on  $\Re z =c$ have the same speed and therefore such a solution does not decompose
into a collection of individual solitons in the process of evolution. In the context of soliton gases, one can note
that solutions to the NDR  \eqref{dr_soliton_gas1}-\eqref{dr_soliton_gas2} for bound state gases are proportional
since these equations have proportional right 
hand sides. Thus, all components of a bound state gas have the same speed $-4c$.  

According to \cite{ElTovbis},  equations \eqref{dr_soliton_gas1}-\eqref{dr_soliton_gas2} with $\G^+=[0,iq]$ 
and $\s\equiv 0$ have solutions 
\begin{equation} \label{box-cond}
u(z)=\frac{-iz}{\pi\sqrt{z^2+q^2}}, \quad v\equiv 0, \qquad \text{ on $[0,iq]$},                     
\end{equation}
which, according to Corollary \ref{cor-smooth1D} in Section \ref{sect-prop-mu} are $C^\infty$ smooth (in fact, analytic) on any proper subarc of 
$\G^+$. The only singularity $z_*=iq$ is at the upper endpoint of $\G^+$, which is a point of non-smoothness of $\G^+$,
see Remark \ref{rem-endp-t}, Section \ref{sect-prop-mu}.
We note that the local behavior of $u$ near $z_*$ is in full agreement with Remark \ref{rem-endp-t}. 

\begin{remark} \label{rem-int-real}
The reader may notice that in the example above $\Gamma^+\cap\R=\{0\}\neq\emptyset$, so, as stated, Theorems \ref{solitonmain}
and \ref{secondthm} are not applicable to this $\Gamma^+$. However, our results, not included in this paper, show
that these theorems are still applicable to the case when a  1D curve $\Gamma^+$ intersects $\R$ transversally.
\end{remark}

Solution \eqref{box-cond} of $Gu(z)=\Im z$ was obtained  by first extending \eqref{dr_soliton_gas1} symmetrically
to $\C^-$ (see equation 
\eqref{full-eq1}, Section \ref{sect-prop-mu}), then differentiating both 
sides in $s=\Im z$
and, finally,  inverting the obtained Finite Hilbert Transform (FHT) on $[-iq,iq]$.  We will use this approach 
in Section \ref{sec-quasimom-vert-cond} below to 
solve \eqref{dr_soliton_gas1} for any bound state condensate. 

It is interesting to observe that the NDR for the KdV soliton gas 
\begin{align} \label{dr_kdv-soliton_gas1}
 \frac 1{\pi} \int _{\Gamma^+}\log \left|\frac{\o+\z}{\o-\z}\right|
u(\o)d\o+\sigma(\z)u(\z)& = \frac\z 2,\\
\label{dr_kdv-soliton_gas2}
 \frac 1{\pi} \int _{\Gamma^+}\log \left|\frac{\o+\z}{\o-\z}\right|
v(\o)d\o+\sigma(\z)v(\z)& = -2\z^3,
\end{align}
first obtained in \cite{El2003}, are closely related with the fNLS bound state NDR  with $\G^+\subset i\R^+$. Indeed,
substituting $z=i\z$ and $w=i\o$, we convert
the left hand sides of  \eqref{dr_soliton_gas1}-\eqref{dr_soliton_gas2} into the left hand sides of 
\eqref{dr_kdv-soliton_gas1}-\eqref{dr_kdv-soliton_gas2}. In fact, solutions  of the corresponding
first NDRs coincide up to the factor 2. Therefore, all the obtained in this paper results applicable to  equation
\eqref{dr_soliton_gas1} and most applicable to equation \eqref{dr_soliton_gas2} with $\G^+\subset i\R^+$ are automatically applicable
to the KdV soliton gas from \cite{El2003}, see Section \ref{sec-quasimom-vert-cond} for details.

\begin{remark}
Realizations of an fNLS soliton  gas  can be related with  the semiclassical limit of the fNLS equation 
with rapidly decaying real one hump potential that typically has $O(1/\e)$ points of discrete spectrum
(solitons) located on $i\R^+$, where $\e>0$  is a small semiclassical  parameter. 
Such potentials include, for example, $\sech\, x$, the barrier (box) potential and many others.
The fNLS time evolution of such potentials is known to typically
lead to the appearance of coherent structures of increasing
complexity that can be locally approximated by genus $n$
finite-gap solutions with $n$ increasing in time \cite{EKT}.
There are strong indications
 that for sufficiently large time $t$ (and consequently large
$n$), the semiclassical spectrum of these solutions fits into
one of the thermodynamic scaling requirements described above.
Taking into account the effective randomization
of phases, the large $t$ evolution of semiclassical solutions
is expected to provide the dynamical realization of a bound state soliton
gas studied in this paper. It is interesting that the first rigorous study  of the 
large $n$ limit of a special $n$-soliton solution
to the KdV  was recently conducted in \cite{Girotti}. It is based on the idea of the primitive
potential from \cite{DZZ}.
\end{remark}

\section{Proof of Theorem \ref{solitonmain}} \label{proofmain}

\subsection{Proof of part (a): the variational problem}\label{sec-var-prob}

We are going to show that the energy functional $J_{\sigma}$ 
defined in 
\eqref{def:Jsigma} 
has a unique minimizer on the set of Borel measures on $\Gamma^+$.
As a first step we show that $J_{\sigma}$ is lower semicontinuous.

\begin{lemma} \label{Jlsc}
	The functional $J_{\sigma}$ is lower semicontinuous
	on the set of positive Borel measures on $\Gamma^+$ with the weak$^*$
	topology.
\end{lemma}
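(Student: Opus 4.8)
The plan is to verify lower semicontinuity term by term, writing
\[
J_\sigma(\mu) = \int G\mu \, d\mu \;-\; 2\int \varphi \, d\mu \;+\; \int \sigma u^2 \, d\lambda,
\]
and checking that the first and third terms are weak$^*$ lower semicontinuous while the middle term is weak$^*$ continuous. A finite sum of lower semicontinuous functions is again lower semicontinuous provided no $(+\infty)-(+\infty)$ ambiguity arises, and here both energy terms are nonnegative while the linear term is finite, so this suffices. Since $\Gamma^+$ is a compact metric space, the weak$^*$ topology on bounded sets of measures is metrizable and weak$^*$ convergence $\mu_n \to \mu$ forces $\mu_n(\Gamma^+) \to \mu(\Gamma^+)$ (test against the constant function $1$), so I may argue with sequences of uniformly bounded mass.

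The linear term is immediate: since $\varphi$ is continuous on the compact set $\Gamma^+$, the definition of weak$^*$ convergence gives $\int \varphi \, d\mu_n \to \int \varphi \, d\mu$, so $\mu \mapsto -2\int\varphi\,d\mu$ is continuous. For the Green energy $\int G\mu\,d\mu = \iint g(z,w)\,d\mu(z)\,d\mu(w)$ with $g(z,w) = \frac1\pi \log\bigl|\frac{z-\bar w}{z-w}\bigr|$, I observe that on $\mathbb C^+ \times \mathbb C^+$ one has $|z-\bar w|^2 - |z-w|^2 = 4\,\Im z\,\Im w > 0$, so $g \geq 0$, while $g$ is continuous off the diagonal and tends to $+\infty$ along it; hence $g$ is a nonnegative lower semicontinuous kernel. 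The classical principle of descent of potential theory then gives that $\mu \mapsto \iint g\,d\mu\,d\mu$ is weak$^*$ lower semicontinuous.

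The real work is the quadratic density term $T(\mu) := \int \sigma u^2\,d\lambda$, interpreted as $+\infty$ when $\sigma\mu$ is not absolutely continuous with respect to $\lambda$. The plan is to represent it by convex duality as
\[
T(\mu) = \sup_{g \in C(\Gamma^+)} \left[\, 2\int \sigma g\, d\mu \;-\; \int \sigma g^2 \, d\lambda \,\right].
\]
For each fixed continuous $g$ the bracketed expression is an affine function of $\mu$ whose only $\mu$-dependent part, $2\int (\sigma g)\,d\mu$, is weak$^*$ continuous because $\sigma g \in C(\Gamma^+)$; a supremum of such functions is automatically lower semicontinuous, which is the desired conclusion. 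To justify the representation when $\sigma\mu = \sigma u\lambda$ is absolutely continuous, I use the pointwise inequality $\sigma u^2 \geq 2\sigma g u - \sigma g^2$ coming from $(u-g)^2 \geq 0$, integrate against $\lambda$, and rewrite $\int \sigma g u \, d\lambda = \int g\,d(\sigma\mu) = \int \sigma g\,d\mu$; equality in the supremum then follows by approximating $u$ in $L^2(\sigma\lambda)$ by continuous functions, which are dense since $\sigma\lambda$ is a finite Borel measure on a compact set (and if $u \notin L^2(\sigma\lambda)$ both sides are $+\infty$ by truncation).

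The delicate point, which I expect to be the main obstacle, is to confirm that this same supremum equals $+\infty$ precisely when $\sigma\mu$ has a nontrivial singular part $\nu \perp \lambda$; this is what encodes the superlinear growth of $u \mapsto u^2$ and forbids weak$^*$ limits from developing concentrated mass without paying an energy cost. Here I would exploit that $\nu$ is carried by a $\lambda$-null Borel set: choosing continuous test functions $g_k$ of height $M$ on a compact set carrying almost all of $\nu$, yet supported in open neighborhoods of arbitrarily small $\lambda$-measure, the gain $2\int g_k \, d\nu \approx 2M\,\nu(\Gamma^+)$ dominates the cost $\int \sigma g_k^2\,d\lambda = O\bigl(M^2\,\lambda(\supp g_k)\bigr)$, so that shrinking the neighborhoods and then letting $M \to \infty$ drives the bracket to $+\infty$. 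Combining the three items, $J_\sigma$ is the sum of a weak$^*$ continuous functional and two weak$^*$ lower semicontinuous functionals, hence weak$^*$ lower semicontinuous.
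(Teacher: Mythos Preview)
Your argument is correct but proceeds by a genuinely different route from the paper's for the crucial density term $T(\mu)=\int\sigma u^2\,d\lambda$. Both proofs dispatch the Green energy and the linear term the same way (lower semicontinuity of the energy kernel; weak$^*$ continuity of integration against a continuous $\varphi$). For $T$, however, you invoke the Fenchel-type dual representation
\[
T(\mu)=\sup_{g\in C(\Gamma^+)}\Bigl[\,2\int \sigma g\,d\mu-\int\sigma g^2\,d\lambda\,\Bigr],
\]
so that $T$ is automatically a supremum of weak$^*$ continuous affine functionals, hence lower semicontinuous; you then verify the representation, including the singular case via Urysohn bumps concentrated on a $\lambda$-null carrier of the singular part. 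The paper instead works directly with a weak$^*$ convergent sequence $\mu_k\to\mu$: it applies Cauchy--Schwarz in the form $\int|\psi|\sigma^{1/2}\,d\mu_k\le\|\psi\|_{L^2(\lambda)}\bigl(\int\sigma u_k^2\,d\lambda\bigr)^{1/2}$, passes to the limit, and reads off first the absolute continuity of $\sigma^{1/2}\mu$ with respect to $\lambda$ (testing $\psi=1_A$) and then the bound $\int\sigma u^2\,d\lambda\le\liminf_k\int\sigma u_k^2\,d\lambda$ (testing $\psi_M=\min(\sigma^{1/2}u,M)$ and using monotone convergence). Your approach is cleaner conceptually and generalizes immediately to any convex superlinear integrand in place of $u\mapsto u^2$; the paper's Cauchy--Schwarz argument is more self-contained and bypasses the regularity bookkeeping (inner/outer regularity, Urysohn) needed to handle the singular case in your duality formula. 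The two are close cousins: the paper's inequality is precisely the $L^2$--$L^2$ duality that underlies your variational formula.
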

\begin{proof}
	Let $(\mu_k)_k$ be a sequence of positive Borel measures
	on $\Gamma^+$ with $\mu$ as the weak$^*$ limit.
	We have to show that
	\begin{align} \label{Jlscproof1} 
	J_{\sigma}(\mu) \leq \liminf_k J_{\sigma}(\mu_k). 
	\end{align}
	To do so, we may assume (by passing to a subsequence
	if necessary) that $J_{\sigma}(\mu_k) < +\infty$ for
	every $k$, and that the limit
	\begin{align}  \label{Jlscproof2}
	J^* := \lim_{k \to \infty} J_{\sigma}(\mu_k) 
	\end{align}
	exists with $J^* < + \infty$. (If it would be infinite,
	there would be nothing to prove).
	
	It is known that the quadratic term $\mu \mapsto \int G(\mu) d\mu$
	in \eqref{Jmu}
	is lower semicontinuous \cite{SaffTotik}, while the linear term
	$\mu \mapsto \int \varphi d\mu$ is continuous with
	respect to the weak$^*$ topology. Thus to prove
	\eqref{Jlscproof1} it suffices to show that 
	$\sigma \mu$ is absolutely continuous with respect
	to $\lambda$, say $\sigma \mu = \sigma u \lambda$, and
	\begin{align} \label{Jlscproof3}
	\int \sigma u^2 d\lambda \leq \liminf_{k \to \infty} \int \sigma u_k^2 d\lambda,
	\end{align}
	and this is what we are going to do.
	
	For each $k$ we have that $J_{\sigma}(\mu_k)$ is finite 
	and thus by the definition \eqref{Jmu} there exists
	a non-negative measurable function $u_k$ on $\Gamma^+$
	such that $\sigma \mu_k = \sigma u_k \lambda$.
	Since $\int G\mu_k d\mu_k \geq 0$, we have
	\begin{align*} \int \sigma u_k^2 d\lambda
	& = J_{\sigma}(\mu_k) - J_0(\mu_k)  \\
	& \leq J_{\sigma}(\mu_k) + 2 \int \varphi d\mu_k 
	\to J^* + 2 \int \varphi d\mu 
	\quad \text{ as } k \to \infty, 
	\end{align*} 
	by weak$^*$ convergence. Hence the integrals
	$\int \sigma u_k^2 d\lambda$ remain bounded as $k \to \infty$, 
	and passing
	to a further subsequence, if necessary, we may assume that
	\begin{align} \label{Jlscproof4}
	\lim_{k \to \infty}  \int \sigma u_k^2 d\lambda = R^2 
	\end{align}
	exists and $R$ is finite.
	
	Let $\psi$ be a continuous function on $\Gamma^+$.
	Then by the Cauchy-Schwarz inequality
	\begin{align*} 
	\int_{\Gamma^+} |\psi| \sigma^{1/2} d\mu_k
	= \int_{\Gamma^+} |\psi| \sigma^{1/2} u_k d\lambda
	& \leq
	\left( \int_{\Gamma^+} |\psi|^2 d\lambda \right)^{1/2}
	\left( \int_{\Gamma^+} \sigma u_k^2 d\lambda \right)^{1/2}. 
	\end{align*}	
	Taking the limit $k \to \infty$ we obtain by 
	weak$^*$ convergence $\mu_k \to \mu$ and \eqref{Jlscproof4} that 
	\begin{align} \label{Jlscproof5}  
	\int |\psi| \sigma^{1/2} d\mu \leq
	R \left( \int_{\Gamma^+} |\psi|^2 d\lambda \right)^{1/2} 
	\end{align}
	for any continuous function $\psi$ on $\Gamma^+$. 
	Since continuous functions are dense in $L^2(\Gamma^+, \lambda)$
	the inequality \eqref{Jlscproof5} continues to hold
	for every $\psi \in L^2(\Gamma^+, \lambda)$.
	
	We take $\psi = 1_A$, where $A$ is a Borel subset of
	$\Gamma^+$ and $1_A$ denotes the characteristic function of $A$. Then \eqref{Jlscproof5} gives
	\[ \int_A \sigma^{1/2} d\mu \leq R \sqrt{\lambda(A)}, \]
	which implies that $\sigma^{1/2} \mu$ is absolutely
	continuous with respect to $\lambda$. Hence there is a non-negative
	density $u$ such that $\sigma^{1/2} \mu = \sigma^{1/2} u \lambda$
	and then also
	\begin{align} \label{Jlscproof6} 
	\sigma \mu =  \sigma u \lambda. 
	\end{align}
	
	Next take $\psi_M = \min(\sigma^{1/2} u, M)$ for some $M > 0$.
	Then $\psi_M$ is bounded and thus certainly in $L^2(\Gamma^+, \lambda)$.
	Hence by \eqref{Jlscproof5} and \eqref{Jlscproof6}
	\[ \int_{\Gamma^+} \psi_M \sigma^{1/2} u d\lambda 
	\leq R \left( \int_{\Gamma^+} \psi_M^2  d\lambda  \right)^{1/2}
	\leq R \left(\int_{\Gamma^+} \psi_M \sigma^{1/2} u d\lambda\right)^{1/2} \]
	since $\psi_M \leq \sigma^{1/2} u$.
	Since the integral is finite we deduce
	\[ \int_{\Gamma^+} \psi_M \sigma^{1/2} u d\lambda \leq R^2. \]
	Lettting $M \to +\infty$ and noting that $\psi_M \nearrow
	\sigma^{1/2} u$, we obtain by monotone convergence
	and \eqref{Jlscproof4}
	\begin{align} \label{Jlscproof7}
	\int \sigma u^2 d\lambda \leq R^2 = \lim_{k \to \infty}
	\int \sigma u_k^2 d\lambda. 
	\end{align}
	This implies \eqref{Jlscproof3} and the lemma is proven.
\end{proof}

\begin{lemma} \label{Jsublevel}
	Suppose $\Gamma^+ \cap \mathbb R = \emptyset$.
	Then 
	$J_{\sigma}$ has compact sub-level sets, i.e.,
	for every $c \in \mathbb R$ the set 
	\begin{align} \label{Jsublevel1} 
	\{ \mu \geq 0 \mid J_{\sigma}(\mu) \leq c \} 
	\end{align}
	is compact in the weak$^*$ topology.
\end{lemma}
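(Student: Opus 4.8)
The plan is to combine the lower semicontinuity of $J_{\sigma}$, already established in Lemma \ref{Jlsc}, with a uniform bound on the total mass $\mu(\Gamma^+)$ over each sub-level set. Since $J_{\sigma}$ is lower semicontinuous, every set $\{\mu \geq 0 \mid J_{\sigma}(\mu) \leq c\}$ is weak$^*$ closed. On the other hand, the set of positive Borel measures on the compact set $\Gamma^+$ with total mass bounded by a fixed constant $M$ is weak$^*$ compact: by Banach--Alaoglu the ball of radius $M$ in $C(\Gamma^+)^*$ is weak$^*$ compact, and the positive cone $\{\mu \mid \int f \, d\mu \geq 0 \text{ for all } f \geq 0\}$ is weak$^*$ closed as an intersection of closed half-spaces. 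Hence, once I show that $J_{\sigma}(\mu) \leq c$ forces $\mu(\Gamma^+) \leq M$ for some $M = M(c)$, the sub-level set will be a weak$^*$ closed subset of a weak$^*$ compact set, and therefore compact.

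The heart of the argument, and the only place where the hypothesis $\Gamma^+ \cap \mathbb{R} = \emptyset$ is used, is a quadratic lower bound for the Green energy. Because $\Gamma^+$ is compact and disjoint from $\mathbb{R}$, the number $\delta := \min_{z \in \Gamma^+} \Im z$ is strictly positive and $d := \operatorname{diam}(\Gamma^+)$ is finite. For $z = x_1 + i y_1$ and $w = x_2 + i y_2$ in $\Gamma^+$ one computes
\[ \left| \frac{z - \bar w}{z - w} \right|^2 = 1 + \frac{4 y_1 y_2}{(x_1 - x_2)^2 + (y_1 - y_2)^2} \geq 1 + \frac{4 \delta^2}{d^2}, \]
so that the Green kernel is bounded below by a strictly positive constant,
\[ \log\left| \frac{z - \bar w}{z - w} \right| \geq c_0 := \tfrac{1}{2}\log\!\left(1 + \frac{4\delta^2}{d^2}\right) > 0, \qquad z, w \in \Gamma^+. \]
Integrating against $d\mu(w)\,d\mu(z)$ then gives the desired bound $\int G\mu \, d\mu \geq \frac{c_0}{\pi}\, \mu(\Gamma^+)^2$.

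With this in hand the mass bound is immediate. Writing $t = \mu(\Gamma^+)$ and using that $\int \sigma u^2 \, d\lambda \geq 0$ together with the boundedness of the continuous function $\varphi$ on the compact set $\Gamma^+$, any $\mu$ with $J_{\sigma}(\mu) \leq c$ satisfies
\[ \frac{c_0}{\pi}\, t^2 \leq \int G\mu \, d\mu \leq c + 2\int \varphi \, d\mu \leq c + 2 \|\varphi\|_{\infty}\, t. \]
This is a quadratic inequality in $t$ with positive leading coefficient, so $t$ cannot exceed the larger root; hence $\mu(\Gamma^+) \leq M$ with $M$ depending only on $c$, $c_0$, and $\|\varphi\|_{\infty}$ (and if the discriminant is negative the sub-level set is empty, hence trivially compact). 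Combined with the first paragraph this yields compactness.

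I expect the Green-energy lower bound to be the main obstacle, while the remaining steps are soft functional analysis. The delicate point is precisely that the positivity of $c_0$ rests on $\delta > 0$, that is, on $\Gamma^+$ being bounded away from the real axis, exactly what $\Gamma^+ \cap \mathbb{R} = \emptyset$ guarantees; if $\Gamma^+$ were to touch $\mathbb{R}$, the Green kernel would degenerate to zero there and the energy would no longer control the total mass, which is why the lemma is restricted to this case.
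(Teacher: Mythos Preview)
Your proof is correct and follows the same strategy as the paper: obtain a quadratic lower bound $\int G\mu\,d\mu \geq c_0\,\mu(\Gamma^+)^2$ with $c_0>0$, deduce a uniform mass bound on each sub-level set, and combine this with lower semicontinuity (Lemma~\ref{Jlsc}) and weak$^*$ compactness of mass-bounded sets of measures on the compact $\Gamma^+$. The only difference is that the paper obtains $c_0$ abstractly as the minimum of the Green energy over probability measures on $\Gamma^+$ (positive because $\Gamma^+\cap\mathbb R=\emptyset$), whereas you derive an explicit pointwise lower bound on the Green kernel; this is a slightly more elementary route to the same estimate.
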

\begin{proof}	
	The functional $\mu \mapsto \int G(\mu) d\mu$ has a unique minimizer among 
	probability measures on $\Gamma^+$, and
	the minimum is positive, say $c_0 > 0$,
	since $\Gamma^+ \cap \mathbb R = \emptyset$. 
	Then, as it is a quadratic functional, 
	\[ \int G(\mu) d \mu  \geq c_0  \left(\int d\mu \right)^2 \]
	By Definition \ref{def:Jsigma}, since $\sigma \geq 0$,
	\begin{equation} \label{Jsublevel2} 
	J_{\sigma}(\mu) \geq
	J_0(\mu)  \geq  c_0 \left( \int d\mu\right)^2
	- 2\max_{\Gamma^+} \varphi \int d\mu. 
	\end{equation}		
	This immediately implies that for every $c \in \mathbb R$
	there is $M > 0$ such that for every $\mu \geq 0$ on  $\Gamma^+$ we have
	\[ J_{\sigma}(\mu) \leq c \implies \int d\mu \leq M. \]
	In other words, the sub-level set \eqref{Jsublevel1}
	is contained in the set
	\begin{equation} \label{Jsublevel3} 
	\{ \mu \geq 0 \mid  \int d \mu \leq M \} \end{equation}
	which is compact in the weak$^*$ topology since $\Gamma^+$ is
	compact. Because of the lower semi-contuinity of $J_{\sigma}$,
	see Lemma \ref{Jlsc}, the set \eqref{Jsublevel1} is
	a closed subset of \eqref{Jsublevel3} and the lemma follows.
\end{proof}

Now we can prove part (a).
\begin{proof}[Proof of Theorem \ref{solitonmain} (a)]
	It follows from \eqref{Jsublevel2} that $J_{\sigma}$
	is bounded away from $-\infty$. Let $(\mu_k)_k$ be
	a sequence of non-negative measures on $\Gamma^+$ such	that
	\begin{align} \label{Jmin1} 
	\lim_{k \to \infty} J_{\sigma}(\mu_k)
	= \inf_{\mu \geq 0} J_{\sigma}(\mu). 
	\end{align}
	Because of Lemma \ref{Jsublevel} the sequence $(\mu_k)_k$
	is in a weak$^*$ compact set, and therefore it has
	a subsequence with a weak$^*$ limit, say $\mu^*$.
	Because of Lemma \ref{Jlsc} and \eqref{Jmin1} we then
	have
	\[ J_{\sigma}(\mu^*) \leq \inf_{\mu \geq 0} J_{\sigma}(\mu) \]
	which implies that $\mu^*$ is a minimizer.
	
	The minimizer is unique, since the functional $J_{\sigma}$
	is strictly convex, which follows from 
	Definition \ref{def:Jsigma} since $J_0$ is 
	known to be strictly convex.  	
	From Definition \ref{def:Jsigma} it is also clear
	that $\sigma \mu^*$ is continuous with respect
	to $\lambda$. This completes the proof of part (a).
\end{proof}

\subsection{Proof of part (b): variational condition on $\supp(\mu^*)$}

The proof relies on standard arguments  in variational calculus. Since we use these arguments later on as well,
we state them in a separate lemma.
\begin{lemma} \label{lemma26}
	Let $\nu$ be a measure on $\Gamma^+$.
	\begin{enumerate}
		\item[\rm (a)]
		If $J_{\sigma}(\nu) < \infty$, then
		\begin{equation} \label{Gvareqintegral1}
		\int (G \mu^* + \sigma u^* - \varphi) d\nu  \geq 0.
		\end{equation}
		\item[\rm (b)] If $ \nu \leq \mu^*$, then
		\begin{equation} \label{Gvareqintegral2}
		\int (G \mu^* + \sigma u^* - \varphi) d\nu  = 0.
		\end{equation}
	\end{enumerate}
\end{lemma}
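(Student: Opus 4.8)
The plan is to run the standard first-variation argument, perturbing the minimizer $\mu^*$ in the direction of $\nu$ and using that $\mu^*$ minimizes $J_\sigma$ over \emph{all} non-negative Borel measures on $\Gamma^+$ (not just probability measures), so that adding mass is admissible.

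For part (a), I would consider the family $\mu_t = \mu^* + t\nu$ with $t \geq 0$; since $\mu^* \geq 0$ and $\nu \geq 0$, each $\mu_t$ is a non-negative measure on $\Gamma^+$. Writing $\sigma\nu = \sigma w \lambda$ (this density $w$ exists because $J_\sigma(\nu) < \infty$), one has $\sigma\mu_t = \sigma(u^* + tw)\lambda$, so $J_\sigma(\mu_t)$ is computed on its finite branch. Using bilinearity of the Green energy together with the symmetry $\int G\mu^* d\nu = \int G\nu\, d\mu^*$ of the Green kernel, and expanding the quadratic $\int \sigma(u^* + tw)^2 d\lambda$, I would obtain
\[ J_\sigma(\mu_t) - J_\sigma(\mu^*) = 2t\int (G\mu^* + \sigma u^* - \varphi)\, d\nu + t^2\Big(\int G\nu\, d\nu + \int \sigma w^2 d\lambda\Big), \]
where the cross term in the $\sigma$-part is rewritten via $\sigma w\lambda = \sigma\nu$ as $\int \sigma u^* w\, d\lambda = \int \sigma u^*\, d\nu$. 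Minimality gives $J_\sigma(\mu_t) \geq J_\sigma(\mu^*)$; dividing by $t > 0$ and letting $t \to 0^+$ yields \eqref{Gvareqintegral1}. Note that only the one-sided (upward) perturbation is available here, which is precisely why part (a) is an inequality.

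For part (b), the hypothesis $\nu \leq \mu^*$ first lets me verify $J_\sigma(\nu) < \infty$: domination gives $\sigma\nu \leq \sigma\mu^*$, so $\sigma\nu$ inherits absolute continuity from $\sigma\mu^*$, while monotonicity of the non-negative Green kernel gives $\int G\nu\, d\nu \leq \int G\mu^*\, d\mu^* < \infty$. Hence part (a) applies to this $\nu$ and furnishes the inequality $\geq 0$. For the reverse inequality I would now also perturb downward: for $t \in [0,1]$ the measure $\mu^* - t\nu \geq \mu^* - \nu \geq 0$ is admissible and again lands on the finite branch, and the same expansion with $-t$ in place of $t$, combined with minimality, gives after $t \to 0^+$ that $\int (G\mu^* + \sigma u^* - \varphi)\, d\nu \leq 0$. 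Combining the two inequalities proves \eqref{Gvareqintegral2}.

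The main point to watch is the finiteness of the terms that make the quadratic expansion legitimate, in particular the mutual-energy term $\int G\mu^*\, d\nu$. This I would control by the Cauchy--Schwarz inequality for the positive-definite Green energy, $\int G\mu^*\, d\nu \leq \big(\int G\mu^*\, d\mu^*\big)^{1/2}\big(\int G\nu\, d\nu\big)^{1/2}$, both factors being finite; likewise $J_\sigma(\nu) < \infty$ forces $\int \sigma w^2 d\lambda < \infty$. The remaining bookkeeping is purely measure-theoretic: ensuring each perturbed measure stays on the absolutely continuous (finite) branch of $J_\sigma$, which holds because sums and dominated differences of measures whose $\sigma$-multiple is absolutely continuous with respect to $\lambda$ retain that property.
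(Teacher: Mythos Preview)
Your proposal is correct and follows essentially the same first-variation argument as the paper: perturb $\mu^*$ by $\pm t\nu$, expand $J_\sigma$ to first order, and use minimality to extract the inequality (one-sided perturbation) and equality (two-sided when $\nu\le\mu^*$). If anything, you are more careful than the paper in justifying the finiteness of the cross terms (via Cauchy--Schwarz for the Green energy) and in verifying explicitly that $\nu\le\mu^*$ implies $J_\sigma(\nu)<\infty$; the paper simply writes the expansion with an $O(\varepsilon^2)$ remainder and asserts the finiteness in part (b) without elaboration.
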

\begin{proof}
(a) Suppose $J_{\sigma}(\nu) < \infty$. Then
$\sigma \nu = \sigma v \lambda$ for some $v$. Note that
\begin{align*} 
\int \sigma (u^*+\varepsilon v)^2 d\lambda - 
\int \sigma (u^*)^2 d\lambda & 
= 2 \varepsilon \int \sigma u^* v d\lambda
+ O(\varepsilon^2) \\
& = 2 \varepsilon \int \sigma u^* d\nu + O(\varepsilon^2)
\end{align*} as $\varepsilon \to 0$, 
and it follows that 
\begin{equation} \label{Gvareqintegral3}
J_{\sigma}(\mu^* + \varepsilon \nu)
- J_{\sigma}(\mu^*) =
2 \varepsilon
\int \left( G\mu^* +  \sigma u^* - \varphi \right) d\nu
+ O(\varepsilon^2).
\end{equation}
Since $\mu^*$ is the minimizer of $J_{\sigma}$,
the left-hand side of \eqref{Gvareqintegral3} is
non-negative for every $\varepsilon > 0$ 
and \eqref{Gvareqintegral1}
follows by letting $\varepsilon \to 0+$. 

\medskip
(b)
If $\nu \leq \mu^*$, then also $J_{\sigma}(\nu) < \infty$, and then the left-hand side of \eqref{Gvareqintegral3} is non-negative for every 
	$\varepsilon \in (-1, \infty)$. Then 
the equality \eqref{Gvareqintegral2} follows
by letting $\varepsilon \to 0$ through negative
values as well.
\end{proof}

\begin{proof}[Proof of part (b)] 
Let $E = \{ z \in \Gamma^+
\mid (G\mu^* +  \sigma u^* - \varphi)(z) < 0 \}$
and  suppose $\mu^*(E) > 0$. Then there is
	$\delta > 0$ such that
	\[ E_{\delta} = 
	\{ z \in \Gamma^+
	\mid (G\mu^* +  \sigma u^* - \varphi)(z) < - \delta  \}
	\] has $\mu^*(E_{\delta}) > 0$. 
	Let $\nu$ be the restriction of $\mu^*$ to $E_{\delta}$.
	We get
	\[ \int (G\mu^* +  \sigma u^* - \varphi) d \nu < - \delta \nu(E_{\delta}) < 0 \]
	which is in contradiction with \eqref{Gvareqintegral2}, since clearly $\nu \leq \mu^*$. 
	We obtain a similar contradiction in case
	$\mu^*( \{ z \in \Gamma^+
	\mid (G\mu^* +  \sigma u^* - \varphi)(z) > 0 \}) > 0$
	and \eqref{maineq-phi}  follows.
\end{proof}

\subsection{Proof of part (c): variational condition outside $\supp(\mu^*)$}

To obtain the equality \eqref{maineq-phi-outside} on 
$\Gamma^+ \setminus \supp(\mu^+)$ 
we need the conditions of part (c)
Theorem \ref{solitonmain}.  That is, 
we require that $\varphi$ is positive,
continuous and superharmonic on $\mathbb C^+$.
We also use Assumption \ref{assumptiondm} on $\lambda$.
It follows from this assumption
that the  Green potential of the restriction of $\lambda$
to any open subset of $\Gamma^+$ is continuous
as well,
as this is a consequence of the following simple 
general fact.
\begin{lemma} \label{lemma36}
	Suppose $\mu$, $\nu$ are measures with
	$\nu \leq \mu$. If $G \mu$ is continuous
	then $G\nu$ is continuous as well.
\end{lemma}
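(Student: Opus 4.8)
The plan is to exploit the linearity of the Green potential together with the lower semicontinuity that every Green potential of a positive measure enjoys. First I would note that, since $\nu \le \mu$, the difference $\mu - \nu$ is again a positive Borel measure on $\Gamma^+$, and by additivity of the integral in the definition \eqref{Greenpot} we obtain the decomposition
\[ G\mu = G\nu + G(\mu - \nu). \]
Both $G\nu$ and $G(\mu-\nu)$ are Green potentials of positive measures, hence superharmonic on $\mathbb{C}^+$; in particular each is lower semicontinuous there. (Recall that for $z,w \in \mathbb{C}^+$ the kernel $\log|(z-\bar w)/(z-w)|$ is non-negative, so these potentials take values in $[0,+\infty]$, and in particular $G(\mu-\nu) \ge 0$.)

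Next I would invoke the elementary topological fact that if two lower semicontinuous functions have a finite continuous sum, then each of them is continuous. Concretely, I would rewrite $G\nu = G\mu - G(\mu-\nu)$. Here $G\mu$ is continuous by hypothesis, while $-G(\mu-\nu)$ is upper semicontinuous because $G(\mu-\nu)$ is lower semicontinuous. Thus $G\nu$, being a sum of a continuous function and an upper semicontinuous function, is upper semicontinuous; but as a Green potential of a positive measure it is also lower semicontinuous. A function that is simultaneously upper and lower semicontinuous is continuous, so $G\nu$ is continuous, which is the assertion.

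The only point demanding a little care—and what I would regard as the main, though modest, obstacle—is ensuring that no indeterminate $\infty - \infty$ arises when subtracting potentials, since Green potentials may in principle be $+\infty$. This is settled by the pointwise bound $G\nu \le G\mu$ (which follows from $G(\mu-\nu) \ge 0$): because $G\mu$ is continuous and hence finite on $\mathbb{C}^+$, the potential $G\nu$ is finite everywhere, and likewise $G(\mu-\nu) = G\mu - G\nu$ is finite. With finiteness secured, the semicontinuity bookkeeping above is unambiguous and the argument goes through verbatim.
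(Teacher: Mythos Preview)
Your proof is correct and follows essentially the same approach as the paper: write $G\nu = G\mu - G(\mu-\nu)$, use that $G(\mu-\nu)$ is lower semicontinuous (so $G\nu$ is upper semicontinuous), and combine with the lower semicontinuity of $G\nu$ itself. Your added remark about finiteness, ruling out an $\infty-\infty$ indeterminacy, is a nice touch that the paper leaves implicit.
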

\begin{proof}
	Being a Green potential,
	$G \nu$ is lower semicontinuous. 
	Since $\mu - \nu \geq 0$, also $G(\mu - \nu)$
	is lower semicontinuous. Hence if $G\mu$
	is continuous, then $G\nu = G\mu - G(\mu -\nu)$
	is upper semicontinuous as well, and
	therefore continuous.
\end{proof}

\begin{proof}[Proof of part (c) of Theorem \ref{solitonmain}]
	Suppose $z_0 \in \supp(\mu^*)$ is such that
	$G \mu^*(z_0) > \varphi(z_0)$.
	Since $G\mu^* - \varphi$ is lower semicontinuous,
	we can then find $\varepsilon > 0$ and
	a disk $D_0 = D(z_0,r_0)$ around $z_0$ such that
	$G \mu^* - \varphi \geq \varepsilon$ on $D_0$.
	The restriction of $\mu^*$ to the complement
	of $D_0$ is a measure that is obviously
	bounded by $\mu^*$, and thus by
	\eqref{Gvareqintegral2} we have
	\[ \int_{\mathbb C \setminus D_0}
	(G \mu^* + \sigma u^* - \varphi) d\mu^* = 0. \] 	
	We conclude, since $G\mu^* + \sigma u^* - \varphi \geq G\mu^* - \varphi \geq \varepsilon$ on $D_0$,
	\begin{align*}
	\int (G \mu^* + \sigma u^*- \varphi) d\mu^*
	& = \int_{D_0} (G \mu^* + \sigma u^* - \varphi) d\mu^* \\
	& \geq \varepsilon \mu^*(D_0) > 0.
	\end{align*}
	The last inequality holds since $D_0$ is a disk around $z_0$
	and $z_0 \in \supp (\mu^*)$. On the other hand, 
	\eqref{Gvareqintegral2}
	also holds for $\nu = \mu^*$ itself, and we find a contradiction.
	
	Thus $G\mu^* \leq \varphi$ on $\supp(\mu^*)$
	and  we first  
	extend the inequality to all of $\mathbb C^+$.
	This will follow from the maximum principle
	for subharmonic functions as we now show.
	Note that  
	\[ \limsup_{z \to x \in \mathbb R}
	(G \mu^* - \varphi)(z)  \leq 0, \]
	since $\varphi$ is non-negative and $G\mu^*$ is
	zero on the real line. Similarly
	\begin{equation} \label{Gmulimit} 
	\limsup_{z \to \infty}
	(G \mu^* - \varphi)(z)  \leq 0. \end{equation}
	Due to the assumption that $\varphi$ is superharmonic,
	and due to the fact that $G\mu^*$ is harmonic
	away from the support of $\mu^*$,
	we also have that 
	$G \mu^* - \varphi$ is subharmonic
	on $\mathbb C^+ \setminus \supp(\mu^*)$.
	Then by the maximum principle for subharmonic functions
	(which says that the maximum is attained on the boundary) 	we indeed have 	that 
	\begin{equation} \label{Gmustar1} G \mu^* \leq  \varphi  \text{ on } \mathbb C^+
	\end{equation}
	
	To prove \eqref{maineq-phi-outside} we take $z_0 \in \Gamma^* \setminus \supp(\mu^*)$ and in order to get a contradiction
	we suppose $(G\mu^* - \varphi)(z_0) \neq 0$.
	Because of \eqref{Gmustar1} we then have
	$(G\mu^* - \varphi)(z_0) < 0$.
	
	Since $G\mu^*-\varphi$ is continuous away from
	the support of $\mu^*$, there is $r_0 > 0$
	and $\varepsilon >0$ such that
	$D_0 = D(z_0,r_0) \subset \Gamma^* \setminus
	\supp(\mu^*)$ and $G\mu^* - \varphi < -\varepsilon \quad	
	\text{on } D_0$.
	Let $\nu$ denote the restriction of $\lambda$ to $D_0$.
	Then $\nu(D_0) > 0$ by Assumption \ref{assumptiondm}
	(since $\supp(\lambda) = \Gamma^+)$, and
	\begin{equation} \label{Gmustar4} 
	\int \left(G\mu^* - \varphi \right) d \nu
	< -\varepsilon \nu(D_0) < 0. 
	\end{equation}
	
	Because of Assumption \ref{assumptiondm} and
	its consequence that is noted before the statement
	of Assumption \ref{assumptiondm}, we 
	have that $G\nu$ is continuous and bounded.
	Hence $\int G\nu d\nu$ is finite. Also
	$\nu$ has a density $v$ with respect to $\lambda$
	(which is just the characteristic function of $D_0 \cap
	\Gamma^+$) and $\int \sigma v^2 d\lambda$ is finite as well.
	Hence by \eqref{Jmu} we have 
	$J_{\sigma}(\nu) < \infty$, and then by
	\eqref{Gvareqintegral1} we get	
	\[ \int (G \mu^*+\sigma u^*-\varphi) d\nu \geq 0. \]
	However $\sigma u^* = 0$ on $\supp(\nu)$,
	and we find a contradiction with \eqref{Gmustar4}.
\end{proof}

A slight extension of the proof also yields the following fact about $\supp \mu^*$.
Let $\Omega$ denote the unbounded  connected component of $\mathbb C^+\setminus \Gamma^+$,
so that $\partial \Omega\subset \R \cup\{\infty\}\cup \Gamma^+$.

\begin{proposition} \label{propGvareq2}
	In the conditions of Theorem \ref{solitonmain}, part (c),
	assume that 
	\begin{equation} \label{Gvareqproof3} 
		\liminf_{ z \to \infty} \varphi(z) > 0.
		\end{equation}
	Then $\partial \Omega\cap \Gamma^+$, i.e., the outer boundary of $\Gamma^+$, is contained in $\supp(\mu^*)$.
	
	In particular, if  $\Gamma^+$ has empty interior and
	a connected complement then $\supp(\mu^*) = \Gamma^+$. 
\end{proposition}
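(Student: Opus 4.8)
The plan is to argue by contradiction, combining the two conclusions of part (c) --- namely that $G\mu^* = \varphi$ holds pointwise on $\Gamma^+\setminus\supp(\mu^*)$ and that $G\mu^*\le\varphi$ on all of $\mathbb C^+$ --- with the strong maximum principle for subharmonic functions and the decay $G\mu^*(z)\to 0$ as $z\to\infty$, which holds because $\mu^*$ has compact support in $\mathbb C^+$.

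First I would suppose, toward a contradiction, that there is a point $z_0\in\partial\Omega\cap\Gamma^+$ with $z_0\notin\supp(\mu^*)$. Then $z_0\in\Gamma^+\setminus\supp(\mu^*)$, so part (c) gives $(G\mu^*-\varphi)(z_0)=0$, while at the same time $G\mu^*-\varphi\le 0$ on $\mathbb C^+$. Thus $z_0$ is an interior point of the open set $\mathbb C^+\setminus\supp(\mu^*)$ at which $G\mu^*-\varphi$ attains its global maximum value $0$. On this open set $G\mu^*-\varphi$ is subharmonic (since $G\mu^*$ is harmonic off $\supp(\mu^*)$ and $-\varphi$ is subharmonic), so if $U$ denotes the connected component of $\mathbb C^+\setminus\supp(\mu^*)$ containing $z_0$, the strong maximum principle forces $G\mu^*-\varphi\equiv 0$ on $U$.

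Next I would propagate this equality to infinity through the unbounded component $\Omega$. Since $\supp(\mu^*)\subset\Gamma^+$, we have $\Omega\subset\mathbb C^+\setminus\Gamma^+\subset\mathbb C^+\setminus\supp(\mu^*)$, and $\Omega$ is connected. Because $z_0\in\partial\Omega$, every neighborhood of $z_0$ meets $\Omega$, so $U\cap\Omega\neq\varnothing$; by maximality of the component $U$ this yields $\Omega\subset U$, hence $G\mu^*=\varphi$ on $\Omega$. Taking a sequence $z_n\in\Omega$ with $z_n\to\infty$ and using $G\mu^*(z_n)\to 0$, I would obtain $0=\lim_n G\mu^*(z_n)=\liminf_n\varphi(z_n)\ge\liminf_{z\to\infty}\varphi(z)>0$ by \eqref{Gvareqproof3}, the desired contradiction. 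This establishes $\partial\Omega\cap\Gamma^+\subset\supp(\mu^*)$. For the final assertion, if $\Gamma^+$ has empty interior then every point of $\Gamma^+$ is a limit of points of $\mathbb C^+\setminus\Gamma^+$, so $\Gamma^+\subset\partial(\mathbb C^+\setminus\Gamma^+)$; if in addition the complement $\mathbb C^+\setminus\Gamma^+$ is connected it coincides with $\Omega$, whence $\partial\Omega\cap\Gamma^+=\Gamma^+$, and combined with $\supp(\mu^*)\subset\Gamma^+$ the first part gives $\supp(\mu^*)=\Gamma^+$.

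I expect the one genuinely delicate step to be the topological identification $\Omega\subset U$: one must be sure that the component $U$ furnished by the maximum principle, which merely reaches the boundary point $z_0$, actually absorbs the entire unbounded component $\Omega$, since it is only the unboundedness of $\Omega$ that lets the behavior at infinity be invoked. Once this is in place, everything else is a routine application of the maximum principle together with the vanishing of the Green potential at infinity.
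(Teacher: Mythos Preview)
Your proof is correct and follows essentially the same route as the paper's. Both arguments hinge on the subharmonicity of $G\mu^*-\varphi$ on $\mathbb C^+\setminus\supp(\mu^*)$, the equality $(G\mu^*-\varphi)(z_0)=0$ from part~(c), and the strict negativity forced at infinity by \eqref{Gvareqproof3}; the paper merely reverses the order, first deducing $G\mu^*-\varphi<0$ on the unbounded component of $\mathbb C^+\setminus\supp(\mu^*)$ and then noting that $z_0$ would lie there, whereas you first pin down the component $U$ of $z_0$ and then show $\Omega\subset U$---the topological step you flagged is the same in both, and you have handled it cleanly.
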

\begin{proof}
	As in the proof of Theorem \ref{solitonmain}, part (c),
	we have \eqref{Gmustar1} but now
	\[ \limsup_{z \to \infty}
		\left( G \mu^* - \varphi\right)(z) < 0. \]
	Then by the maximum principle $G \mu^* - \varphi < 0$
	on the unbounded connected component of
	$\mathbb C^+ \setminus \supp(\mu^*)$. But if $z_0\in \G^+\setminus\supp \m^*$,
	then $(G \mu^* - \varphi)(z_0) = 0$ according to \eqref{maineq-phi-outside}.
	Thus,  $\part \O\cap \G^+ \subset \supp \m^*$.
\end{proof}

\section{Proof of Theorem \ref{secondthm} and the second NDR equation}\label{sec-proof-t2}

\subsection{Proof of part (a)}

\begin{proof}
	Take $z_0 \in S$ with $\sigma(z_0) > 0$.
	Since $\sigma$ is continuous there exist a disk 
	$D = D_0(z_0,r_0)$ around $z_0$ and a number 
	$C_0 >0$  such that $
	\sigma(z) \geq C_0 > 0$ for all $z \in D \cap \Gamma^+$.
	Let $\nu$ be the restriction of $\lambda$ to
	$D \cap \Gamma^+$. 
	Then $G\nu$ is continuous by Assumption \ref{assumptiondm}. 
	Let $\mu_0^*$ be the restriction of $\mu^*$ to
	$D \cap \Gamma^+$.
	Then $\mu_0^*$ has the density $u^*$ on 
	$D \cap \Gamma^+$ and  
	\[ \sigma u^* \leq \varphi, \quad \mu_0^*\text{-a.e.} \]
	which is a consequence of \eqref{maineq-phi}.
	
	Since $\sigma \geq C_0$ on $\supp(\mu_0^*)$,
	we find
	\[ u^* \leq \frac{1}{C_0} \max_{z \in \Gamma^+} \varphi(z), 
	\quad \mu_0^*\text{-a.e.}. \]
	This means that there is a constant $C >0$ such
	that $\mu_0^* \leq C \nu$.
	Since $G\nu$ is continuous, it 
	follows from Lemma \ref{lemma36} that $G\mu_0^*$ is continuous.
	
	Then $G\mu^*$ is continuous on $D \cap \Gamma^+$, and in particular at $z_0$, 
	since $G\mu$ is the sum of $G\mu_0^*$ and $G(\mu^*-\mu_0^*)$, and the latter
	is  continuous on $D \cap \Gamma^+$ as $\mu^*-\mu_0^*$ is supported
	away from this open set.
	Since $z_0 \in S$ is arbitrary, we find that $G\mu^*$ is continuous on 	 $S$.
	
	Because of \eqref{maineq-phi} we have
	\[ u^* = \frac{\varphi - G \mu^*}{\sigma} \quad \mu^*\text{-a.e. on } S. \]
	The right-hand side is continuous on $S$, and if we just 
	redefine $u^*$ by the right-hand side, then the new $u^*$ is still a
	valid density for $\mu^*$ and it is continuous. We also find that
	the identity $G\mu^* + \sigma u^* = \varphi$ holds on $S$, which concludes the proof of part (a).
\end{proof}

\subsection{Proof of part (b)}

We need some notions from potential theory
that we briefly summarize, see \cite{ArmitageGardiner, Helms, Ransford,SaffTotik} for fuller accounts.
A set where a superharmonic function is $+\infty$
is called a polar set, otherwise it is non-polar. 
A compact set $A \subset \mathbb C^+$
is non-polar if and only if its capacity
is positive, which means that there is a 
probability measure $\nu$ with $\supp(\nu) \subset A$ 
and $\int G \nu d\nu < \infty$.

The fine topology on $\mathbb C^+$ is the coarsest topology
for which all Green potentials $G\mu$, $\mu \geq 0$
are continuous. The fine topology is finer than
the usual Euclidean topology, since there exist non-continuous Green potentials.
A fine neighborhood of $z_0$ is a neighborhood in
the fine topology.
Then a set $S$ is thick at $z_0$ if and only if
every fine neighborhood of $z_0$ has a non-empty
intersection with $S \setminus \{z_0\}$,
see e.g.\ \cite[Theorem 7.2.3]{ArmitageGardiner}.

\begin{lemma}
	Suppose $S$ is thick at $z_0$. Let $U$ be
	a fine neighborhood of $z_0$. Then $S \cap U$ is non-polar.
\end{lemma}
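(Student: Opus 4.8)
The plan is to argue by contradiction, deriving a failure of thickness from the assumption that $S \cap U$ is polar. So I would begin by supposing $S \cap U$ is polar. Passing at the outset to $(S\cap U)\setminus\{z_0\}$, which is again polar, removes any ambiguity about whether $z_0$ itself lies in the set. The one genuinely potential-theoretic input I would invoke is the standard fact that \emph{every polar set is thin at every point of $\mathbb C^+$} (this is classical; see, e.g., \cite{Ransford} or \cite{ArmitageGardiner}). Applied here, it gives that $(S\cap U)\setminus\{z_0\}$ is thin at $z_0$.

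Next I would translate thinness into the language of the fine topology, exactly dual to the way the excerpt phrases thickness just before the lemma. Thinness of a set $F$ at $z_0$ is precisely the negation of that characterization: $F$ is thin at $z_0$ if and only if there exists a fine neighborhood $W$ of $z_0$ whose intersection with $F\setminus\{z_0\}$ is empty. Applying this with $F = S\cap U$, I obtain a fine neighborhood $W$ of $z_0$ with $W \cap \bigl((S\cap U)\setminus\{z_0\}\bigr) = \emptyset$.

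The key step is then purely formal: intersect the two fine neighborhoods. Set $V := U \cap W$. Since the fine topology is a genuine topology, finite intersections of neighborhoods are neighborhoods, so $V$ is again a fine neighborhood of $z_0$. I claim $V \cap (S\setminus\{z_0\}) = \emptyset$: any $z$ in this intersection would satisfy $z \in U$, $z\in S$, and $z \neq z_0$, hence $z\in (S\cap U)\setminus\{z_0\}$, while also $z \in W$, contradicting the defining property of $W$. But $S$ is assumed thick at $z_0$, so by the fine-topology characterization quoted in the excerpt every fine neighborhood of $z_0$---in particular $V$---must meet $S\setminus\{z_0\}$. This contradiction forces $S\cap U$ to be non-polar.

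The only real obstacle is having the implication \emph{polar $\Rightarrow$ thin at every point} available; everything after it is a one-line manipulation of fine neighborhoods. A secondary point worth tracking is the role of the base point $z_0$: because both thickness and thinness concern only $S\setminus\{z_0\}$, it is irrelevant whether $z_0$ belongs to the polar set $S\cap U$, which is why restricting to $(S\cap U)\setminus\{z_0\}$ at the start keeps the argument clean.
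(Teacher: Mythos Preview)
Your proof is correct and follows essentially the same route as the paper's: assume $S\cap U$ polar, invoke ``polar $\Rightarrow$ thin everywhere'' to get a fine neighborhood $W$ of $z_0$ missing $(S\cap U)\setminus\{z_0\}$, and observe that $U\cap W$ is then a fine neighborhood of $z_0$ missing $S\setminus\{z_0\}$, contradicting thickness. The only cosmetic difference is that you pass explicitly to $(S\cap U)\setminus\{z_0\}$ at the outset, while the paper applies thinness directly to $S\cap U$; both are fine.
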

\begin{proof}
	Suppose $S \cap U$ is polar. By
	\cite[Theorem 7.2.2]{ArmitageGardiner} 
	a polar set is thin everywhere, so in particular
	$S \cap U$ is thin at $z_0$. Thus 
	there is a fine neighborhood  $V$ of $z_0$
	that does not intersect $(S \cap U ) \setminus \{z_0\}$.
	
	Then $U \cap V$ is a fine neighborhood of $z_0$
	that does not intersect $S \setminus \{z_0\}$
	which is a contradiction, since $S$ is thick at $z_0$. 
\end{proof}

Now we turn to the proof of part (b) of Theorem \ref{secondthm}.

\begin{proof}[Proof of part (b).] 
	Suppose $S_0$ is thick at $z_0$, and assume $G\mu^*(z_0) \neq  \varphi(z_0)$. 	
	From Theorem \ref{solitonmain} (c) we know that $G \mu^*(z_0) \leq \varphi(z_0)$, 
	and therefore there is $\delta > 0$ 	such that
	\[ G \mu^*(z_0) < \varphi(z_0) - \delta. \]
	Then
	$A = \{ z \in \mathbb C^+ \mid G \mu^*(z) \leq \varphi(z) - \delta \}$
	is a fine neighborhood of $z_0$, and it is
	also closed in the usual topology,
	since $G\mu^*$ is lower semicontinuous
	and $\varphi$ is continuous.
	
	Since $S_0$ is thick at $z_0$, we conclude that
	$A \cap S_0$ has positive capacity.
	Thus there is a probability measure $\nu$ on $A \cap S_0$ with
	$\int G\nu d\nu < \infty$. Since $\sigma = 0$ on $\supp(\nu) \subset S_0$
	we then also have $J_{\sigma}(\nu) < \infty$.
	Since $\sigma = 0$ and $G\mu^* \leq \varphi - \delta$ on the support of $\nu$, 
	we find
	\begin{align*}
	\int (G\mu^* + \sigma u^* - \varphi) d\nu \leq  - \delta < 0 
	\end{align*}
	which is in contradiction with Lemma \ref{lemma26} (a).
	This proves part (b) of Theorem~\ref{secondthm}. 	
\end{proof}

\subsection{Solution of equation \eqref{dr_soliton_gas2}}\label{sect-sol-dr2}

In this section we use Theorems \ref{solitonmain} 
and \ref{secondthm} to prove the existence of a solution
to \eqref{Gueq} with $\varphi(z)$ that is sufficiently smooth on some neighborhood 
of $\G^+$, but is not necessarily  positive or superharmonic in $\C^+$.
As an  example,  $\varphi(z) = -4 \Im z \Re z$ corresponds to equation \eqref{dr_soliton_gas2}.
We start with the following lemma.

\begin{lemma} \label{lem-pos-def}
The energy functional $J_0(\m)$ with $\varphi\equiv 0$, see \eqref{J0}, is non negative on the set of 
all signed compactly supported Borel measures on $\C^+$.  Moreover, it is zero if and only if $\m=0$.
\end{lemma}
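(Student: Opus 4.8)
The plan is to show that $J_0(\mu)$ with $\varphi \equiv 0$, namely the pure Green energy $\int G\mu \, d\mu$, is a positive-definite quadratic form on signed compactly supported measures in $\C^+$. The natural approach is to exploit the explicit structure of the Green potential \eqref{Greenpot} by passing to the full plane via Schwarz reflection. For a signed measure $\mu$ supported in $\C^+$, I would introduce its reflected measure $\bar\mu$ (the pushforward of $\mu$ under $z \mapsto \bar z$, supported in $\C^-$) and consider the combined signed measure $\tilde\mu = \mu - \bar\mu$ on all of $\C$. The key algebraic observation is that the Green kernel $\frac{1}{\pi}\log\left|\frac{z-\bar w}{z-w}\right|$ is, up to the factor $\frac{1}{\pi}$ and sign conventions, exactly the combination of logarithmic kernels that arises when one writes out the logarithmic energy of $\tilde\mu$ against itself.

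Concretely, I would compute the logarithmic energy $I(\tilde\mu) = \iint \log\frac{1}{|z-w|} \, d\tilde\mu(z) d\tilde\mu(w)$ and show by splitting $\tilde\mu = \mu - \bar\mu$ into its four cross-terms that
\[
I(\mu - \bar\mu) = 2\pi \int G\mu \, d\mu = 2\pi J_0(\mu)
\]
(with $\varphi \equiv 0$), using the symmetry $\log|z-w| = \log|\bar z - \bar w|$ to pair the $(\mu,\mu)$ term with the $(\bar\mu,\bar\mu)$ term and the two mixed terms with each other. The crucial point is then that $\tilde\mu = \mu - \bar\mu$ is a signed measure of \emph{total mass zero} on $\C$, since $\mu$ and $\bar\mu$ have equal total mass. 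For compactly supported signed measures of total mass zero with finite energy, the logarithmic energy $I(\tilde\mu)$ is known to be non-negative, and zero only when $\tilde\mu = 0$; this is the standard positivity property of the logarithmic kernel (see e.g.\ \cite{SaffTotik}, where it is phrased via the energy principle or through the Fourier/frequency-side representation of $-\log|z-w|$). This yields $J_0(\mu) \geq 0$ immediately.

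For the equality case, I would argue that $J_0(\mu) = 0$ forces $I(\tilde\mu) = 0$, hence $\tilde\mu = \mu - \bar\mu = 0$ as a measure on $\C$. Since $\mu$ is supported in $\C^+$ and $\bar\mu$ in $\C^-$, their supports are disjoint (away from $\R$, which $\mu$ avoids as it is compactly supported in the open upper half-plane), so $\mu - \bar\mu = 0$ already forces $\mu = 0$ on $\C^+$. This gives the ``if and only if'' conclusion.

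The main obstacle I anticipate is making the positivity of the logarithmic energy rigorous for \emph{signed} measures rather than differences of probability measures, and handling the convergence of the mixed integrals. One must ensure the energies involved are finite (or treat the argument as valid whenever $J_0(\mu) < \infty$, with the inequality trivial otherwise) and that Fubini-type interchanges are justified. The cleanest route around the finiteness subtleties is to invoke the energy principle for the logarithmic kernel directly from \cite{SaffTotik}: a compactly supported signed measure of total mass zero and finite energy has non-negative logarithmic energy, vanishing only for the zero measure. If that statement is cited, the proof reduces to the bookkeeping of the reflection identity $I(\mu - \bar\mu) = 2\pi J_0(\mu)$, and the equality case follows as above.
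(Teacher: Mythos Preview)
Your proposal is correct and follows essentially the same route as the paper: extend $\mu$ anti-symmetrically by Schwarz reflection to a signed measure $\tilde\mu=\mu-\bar\mu$ of total mass zero, identify $J_0(\mu)$ with (a constant times) the logarithmic energy of $\tilde\mu$, and then invoke the positivity of the logarithmic energy for zero-mass signed measures from \cite{SaffTotik} (the paper cites Lemma~I.1.8 there). Your write-up is in fact more explicit than the paper's, spelling out the four-term expansion of $I(\mu-\bar\mu)$ and the disjoint-support argument for the equality case.
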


\begin{proof}
Let $\G^+\subset \C^+$ be a compact containing $\supp \m$
Denote $\G:=\G^+\cup \G^-$, where the compact
 $\G^- \subset \C^-$ is Schwarz symmetrical to $\G^+$. 
Any (signed) Borel measure  $\m$ with $\supp \m\subset \G^+$ we anti-symmetrically extend to the signed measure
$\tilde \m$ on $\G$ by setting $\tilde \mu(S)=\mu(S)$ if $S\subset\G^+$ and $\tilde \mu(S)=-\mu(S)$
 if $S\subset\G^-$. 
 Then  $\tilde \mu(\G)=0$ and  it is easy to check that
   \begin{equation} \label{log-pot-sym}
   G\mu(z)=-\frac{1}{\pi } \int\log|z-w|d\tilde{\m}(w)=:U\tilde \mu(z),
   \end{equation}
   where $U\tilde \m$ is the standard logarithmic potential of $\tilde \m$.
   Now the statement follows from \cite{SaffTotik}, Lemma I.1.8. 
 \end{proof}

 Lemma \ref{lem-pos-def} implies that  the operator $v \mapsto Gv$ defined by \eqref{Greenpot}
 (in some suitable function space)
is positive definite. Therefore,
if $\sigma > 0$
on $\Gamma^+$ then the operator $v \mapsto Gv +\sigma v$
is also positive definite, with the spectrum bounded away from $0$; then its inverse exists 
and $v = (G + \sigma)^{-1} \varphi$ is the solution. 
If $\sigma$ has zeros on $\Gamma^+$ then this argument
does not work. However, 
 Theorem \ref{theo-eq2} stated below covers the latter case.

\begin{theorem} \label{theo-eq2}
If  $\varphi$ is a $C^2$ function in a neighborhood
of $\Gamma^+$ then equation \eqref{Gueq}
has a unique weak solution (in the sense of Theorem \ref{solitonmain}).  
If either $\sigma \equiv 0$ on $\Gamma^+$
or $\sigma > 0$ on $\Gamma^+$, then \eqref{Gueq}
is valid everywhere on $\Gamma^+$, and the solution is
a  regular solution. 
\end{theorem}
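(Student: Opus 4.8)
The plan is to reduce everything to the situation already settled in Theorem \ref{solitonmain} and Theorem \ref{secondthm}, where the right-hand side is positive, continuous and superharmonic, by writing the general $C^2$ datum $\varphi$ as a \emph{difference} of two such functions, and to get uniqueness from the positive-definiteness of the Green energy (Lemma \ref{lem-pos-def}). Note that the minimizer of $J_\sigma$ with the sign-changing datum $\varphi$ is itself \emph{not} the sought solution, since Theorem \ref{solitonmain}(c) need not apply and the equality may fail off $\supp\mu^*$. Instead, I would first establish a decomposition lemma: for $\varphi\in C^2$ near $\Gamma^+$ there are non-negative, continuous, superharmonic functions $\varphi_1,\varphi_2$ on $\mathbb{C}^+$ with $\varphi_1-\varphi_2=\varphi$ on $\Gamma^+$. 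Applying Theorem \ref{solitonmain} to each $\varphi_i$ yields minimizers $\mu_i^*$ of $J_\sigma$ (with datum $\varphi_i$), and the signed measure $\mu^*:=\mu_1^*-\mu_2^*$, with density $u^*:=u_1^*-u_2^*$ of its absolutely continuous part, is the candidate weak solution of \eqref{Gueq}.

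For the decomposition the obstruction is that $\varphi$ (e.g.\ $-4\,\Im z\,\Re z$) both changes sign and grows, hence cannot be dominated by any positive superharmonic function on the whole half-plane; since only the values on $\Gamma^+$ matter, I would first extend $\varphi$ to some $\tilde\varphi\in C^2_c(\mathbb{C}^+)$ with $\tilde\varphi=\varphi$ on a neighborhood of $\Gamma^+$. Writing $\Delta\tilde\varphi=g_+-g_-$ with $g_\pm\ge 0$ continuous and compactly supported, and using that the Green potential \eqref{Greenpot} satisfies $\Delta G\rho=-2\rho$ in $\mathbb{C}^+$, the function $\psi:=\tilde\varphi-\tfrac12 G(g_-\,dA)+\tfrac12 G(g_+\,dA)$ is harmonic and bounded on $\mathbb{C}^+$ and vanishes on $\mathbb{R}$ and at $\infty$; the maximum principle then forces $\psi\equiv 0$, so that
\[ \varphi=\tfrac12 G\rho_- -\tfrac12 G\rho_+ \quad\text{on }\Gamma^+,\qquad \rho_\pm:=g_\pm\,dA. \]
Here $\varphi_1:=\tfrac12 G\rho_-$ and $\varphi_2:=\tfrac12 G\rho_+$ are non-negative, bounded, continuous and superharmonic on all of $\mathbb{C}^+$ because $\rho_\pm$ are bounded with compact support in $\mathbb{C}^+$, which is exactly what the maximum-principle step inside Theorem \ref{solitonmain}(c) requires (that proof uses only non-negativity and superharmonicity, not strict positivity).

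With the decomposition in hand, Theorem \ref{solitonmain}(c) gives, for each $i$, both $G\mu_i^*+\sigma u_i^*=\varphi_i$ $\mu_i^*$-a.e.\ and $G\mu_i^*=\varphi_i$ on $\Gamma^+\setminus\supp\mu_i^*$; subtracting produces the weak-solution relations for $\mu^*$. For uniqueness, two weak solutions $\mu,\mu'$ give $\nu:=\mu-\mu'$ of finite Green energy with $G\nu+\sigma(u-u')=0$ quasi-everywhere on $\Gamma^+$; pairing against $\nu$ (which, having finite energy, charges no polar set) yields $\int G\nu\,d\nu+\int\sigma(u-u')^2\,d\lambda=0$, and Lemma \ref{lem-pos-def} forces $\nu=0$. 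The two special cases then upgrade the equality to \emph{all} of $\Gamma^+$: if $\sigma>0$ on $\Gamma^+$ then $S=\Gamma^+$ in Theorem \ref{secondthm}(a), so each $u_i^*$ is continuous and the identity holds everywhere, giving a regular solution with continuous density $u^*=u_1^*-u_2^*$; if $\sigma\equiv 0$, then \eqref{Gueq} reads $G\mu=\varphi$ and Corollary \ref{cor-eq-u} extends $G\mu_i^*=\varphi_i$ to all of $\Gamma^+$ whenever $\Gamma^+$ is thick at each of its points (as for a finite union of contours or $2D$ regions), again yielding equality after subtraction.

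The hard part is the decomposition lemma: manufacturing $\varphi_1,\varphi_2$ that are simultaneously positive \emph{and} superharmonic on the entire half-plane while reproducing the sign-changing, growing datum on $\Gamma^+$. The Green-potential representation is precisely the device that resolves this, replacing the impossible task of globally dominating $\varphi$ by the purely local matching on $\Gamma^+$. A secondary technical point, which I would treat carefully, is the uniqueness pairing: one must stay within the class of finite-energy signed measures so that Lemma \ref{lem-pos-def} is available and so that the quasi-everywhere identities may legitimately be integrated against $\nu$.
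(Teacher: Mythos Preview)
Your approach is correct and follows the same overall strategy as the paper: decompose $\varphi$ on $\Gamma^+$ as a difference of two non-negative, continuous, superharmonic functions on $\mathbb{C}^+$, apply Theorem~\ref{solitonmain} to each piece, subtract, and get uniqueness from Lemma~\ref{lem-pos-def}; the upgrade to pointwise equality when $\sigma>0$ or $\sigma\equiv 0$ then follows from Theorem~\ref{secondthm} and Corollary~\ref{cor-eq-u}.

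The only real difference is in how the decomposition is manufactured. You take a $C^2_c(\mathbb C^+)$ extension $\tilde\varphi$, split $\Delta\tilde\varphi=g_+-g_-$, and use a Green/Riesz representation $\tilde\varphi=\tfrac12 G(g_-\,dA)-\tfrac12 G(g_+\,dA)$, justified by a maximum-principle argument showing the harmonic remainder vanishes; your $\varphi_i$ are thus Green potentials of bounded compactly supported densities. The paper instead takes a $C^2$ compactly supported extension $h$ with $h=0$ on $\mathbb R$, picks $c>\tfrac12\sup\Delta h$, chooses a nonnegative $w\in C^2_c(\mathbb C^+)$ with $w=c$ on a compact containing $\Gamma^+\cup\supp h$, sets $\varphi_1:=G(w\,dA)$, and checks directly that $\varphi_2:=h+\varphi_1$ satisfies $\Delta\varphi_2=\Delta h-2w<0$; positivity of both $\varphi_i$ then comes from the minimum principle. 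Your route is the more conceptual one (it is essentially the Riesz decomposition) and gives the pieces explicitly in terms of $\Delta\tilde\varphi$; the paper's is more hands-on and avoids the representation-formula step. Your observation that the proof of Theorem~\ref{solitonmain}(c) uses only non-negativity of $\varphi$, not strict positivity, is correct, and your explicit caveat about thickness of $\Gamma^+$ in the $\sigma\equiv 0$ case is something the paper leaves implicit.
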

\begin{proof}
Let $h$ be a compactly supported
$C^2$ function in $\C$ with $h=\varphi$ on $\Gamma^+$ and $h=0$ on $\R$.
Let $K \subset \mathbb C^+$ be a  compact set containing  $\G^+$ as well as the 
support of $h$.
Since $h$ is $C^2$ and has compact support there is $c > 0$
such that $\Delta h < 2c$ in $\C$, where $\Delta$ is the Laplace operator. 

Let $w$ be a non-negative $C^2$ function in $\C^+$ with compact support
such that $w = c$ on $K$. Then the function
\[ \varphi_1(z) =
	\frac{1}{\pi} \int_{\mathbb C^+}
		\log \left|\frac{z- \bar{s}}{z-s} \right| w(s) dA(s) \]
where $dA$ is planar Lebesgue measure, 
is superharmonic and $\Delta \varphi_1 = -2w$.
So,
\[  \Delta( h + \varphi_1) = \Delta h +  \Delta \varphi_1
	< 2c - 2w = 0 \qquad \text{on } K, \]
while $\Delta (h+ \varphi_1) =  \Delta \varphi_1 = - 2w \leq 0$ on $\mathbb C \setminus K$. 

Then we split
\[ \varphi =  h + \varphi_1 - \varphi_1 \qquad \text{on } \Gamma^+ \]
where both $\varphi_1$ and $\varphi_2 = h+\varphi_1$ 
are continuous and superharmonic on $\mathbb C^+$.
Both functions are non-negative, and since they are not-identically zero, they  must be positive on $\mathbb C^+$ by the
minimum principle for superharmonic functions.

Thus Theorem \ref{solitonmain} applies, and there are positive measures $\mu_1^*$
and $\mu_2^*$ on $\Gamma^+$ with corresponding densities
$u_1^*$ and $u_2^*$ with respect to $\lambda$, 
such that for $j=1,2$,
\begin{align}\label{eq-split}
G\mu_j^* + \sigma u_j^* = \varphi_j,  
\end{align}
in the weak sense (i.e., $\mu_j^*$-a.e., and equality
outside the support of $\mu_j^*$)
Hence
\[ G(\mu_2^* - \mu_1^*) + \sigma (u_2^*-u_1^*)
	= \varphi_2 - \varphi_1 = \varphi \quad \text{ on } \Gamma^+ \]
in the weak sense.	
	Thus, the signed measure $\mu_2^*- \mu_1^*$ solves \eqref{Gueq}, where $\sigma u$ term is understood in the same way as in 
	Theorem \ref{solitonmain}.
	
If $\sigma \equiv 0$ on $\Gamma^+$
or $\sigma > 0$ on $\Gamma^+$, then Theorem \ref{secondthm}
applies and the equations \eqref{eq-split}
are satisfied everywhere on $\Gamma^+$.
Then $\mu_2^*-\mu_1^*$ solves \eqref{Gueq} on $\Gamma^+$.
Finally, the uniqueness of solution follows from Lemma \ref{lem-pos-def}.
\end{proof}

\section{Properties of the minimizer $\mu^*$ under additional assumptions on $\Gamma^+$ and $\sigma$  }\label{sect-prop-mu}

In this section we study the support of the minimizer $\mu^*$ and its  smoothness under some additional assumptions.
Everywhere in this section we assume that  $\Gamma^+$ is a finite union of compact 1D arcs and 2D  closed regions equipped with the standard
Lebesgue measure $\lambda$ each. Moreover,  unless specified otherwise, we assume that $\varphi$ is such that 
equation \eqref{Gueq} has a weak solution.

\subsection{Geometry of $\supp \m^*$}

We start with the following lemma. 

\begin{lemma} \label{lem-s=0}
	Suppose $\Gamma^+$ contains a simple closed contour 
	that is the boundary $\partial \Omega$ of a 
	bounded open set $\Omega$. 
	Let $\mu^*$ be a solution of \eqref{Gueq}: 
	$G\mu^*+\sigma u^*=\varphi$   with some $\sigma \geq 0$ on $\Gamma^+$.
Assume that $\sigma =0$ on $\partial \Omega$ and 
$\varphi$ is harmonic on $\Omega$. 
Then $\supp \mu^*\cap \Omega = \emptyset$,
i.e., there is no support of $\mu^*$ inside $\Omega$.
\end{lemma}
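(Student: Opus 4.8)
The plan is to study the function $h := G\mu^* - \varphi$ on the bounded region $\Omega$ and to show, by combining the minimum principle with the variational condition of Lemma \ref{lemma26}, that $\mu^*$ cannot place mass inside $\Omega$. Note first that $\Omega \subset \mathbb C^+$ (the Jordan interior of a contour lying in the open upper half-plane stays in $\mathbb C^+$), so $G\mu^*$ is superharmonic on $\Omega$; since $\varphi$ is harmonic on $\Omega$, the difference $h$ is superharmonic on $\Omega$.

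First I would settle the boundary behaviour. On $\partial\Omega$ we have $\sigma = 0$, so the relation $G\mu^* + \sigma u^* = \varphi$ collapses to $G\mu^* = \varphi$, i.e.\ $h = 0$ on $\partial\Omega$: on $\partial\Omega \cap \supp\mu^*$ this is the $\mu^*$-a.e.\ Euler--Lagrange equality \eqref{maineq-phi}, while on the remainder one has the companion inequality $G\mu^* \geq \varphi$ holding up to a polar set, so that $h \geq 0$ on $\partial\Omega$ outside a set negligible for the minimum principle. The minimum principle for superharmonic functions then gives $h \geq 0$ throughout $\Omega$. Equivalently, the Riesz decomposition on $\Omega$ writes $h(z) = \tfrac{1}{\pi}\int_\Omega g_\Omega(z,w)\,d\mu^*(w)$, with $g_\Omega$ the positive Green function of $\Omega$, which exhibits $h \geq 0$ directly.

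Next I would invoke the variational condition. Taking $\nu$ to be the restriction $\mu^*|_\Omega \leq \mu^*$, Lemma \ref{lemma26}(b) gives $\int_\Omega (G\mu^* + \sigma u^* - \varphi)\,d\mu^* = 0$. On $\Omega$ the integrand equals $h + \sigma u^* \geq h \geq 0$, hence it vanishes $\mu^*$-a.e.\ on $\Omega$; in particular $h = 0$ $\mu^*$-a.e.\ on $\Omega$. Suppose, for contradiction, that $\mu^*(\Omega) > 0$. Then $h$ vanishes on a set of positive $\mu^*$-measure, hence at some interior point $z_0 \in \Omega$. Since $h \geq 0$ attains its minimum value $0$ at the interior point $z_0$ of the connected set $\Omega$, the strong minimum principle forces $h \equiv 0$ on $\Omega$. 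But $h \equiv 0$ means $G\mu^* = \varphi$ is harmonic on $\Omega$, so $\Delta G\mu^* = 0$ there, whereas $\Delta G\mu^* = -2\mu^*$; thus $\mu^*|_\Omega = 0$, contradicting $\mu^*(\Omega) > 0$. Hence $\mu^*(\Omega) = 0$, and since $\Omega$ is open this yields $\supp\mu^* \cap \Omega = \emptyset$.

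The main obstacle is the rigorous justification of $h \geq 0$ on $\partial\Omega$: one must reconcile the fact that \eqref{maineq-phi} holds only $\mu^*$-a.e.\ on the support with the q.e.\ inequality off the support, and then apply the minimum principle allowing a polar (and $\mu^*$-null) exceptional set on $\partial\Omega$. Here the geometry helps, since $\partial\Omega$ is a simple closed contour, hence non-thin at each of its points, so the variational relations pin down $h$ on $\partial\Omega$ up to a negligible set. The remaining ingredients — superharmonicity of $h$, the strong minimum principle on the connected Jordan interior, and the identity $\Delta G\mu^* = -2\mu^*$ — are standard.
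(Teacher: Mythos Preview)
Your argument is correct, but it takes a more circuitous path than the paper's. Both proofs begin the same way: with $\sigma=0$ on $\partial\Omega$ the equation gives $G\mu^*=\varphi$ there, and since $h:=G\mu^*-\varphi$ is superharmonic on $\Omega$, the minimum principle yields $h\geq 0$ throughout $\Omega$. The divergence comes in the second step. The paper simply observes that $G\mu^*\leq G\mu^*+\sigma u^*=\varphi$ on \emph{all} of $\Gamma^+$ (just because $\sigma u^*\geq 0$), so $h\leq 0$ on $\Omega\cap\Gamma^+$; combined with $h\geq 0$ this gives $h=0$ on $\Omega\cap\Gamma^+$. Then $G\mu^*$ is harmonic on $\Omega\setminus\Gamma^+$ and matches $\varphi$ on the boundary of each component, so the maximum/minimum principle forces $h=0$ on all of $\Omega$, hence $\mu^*|_\Omega=0$. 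You instead invoke Lemma~\ref{lemma26}(b) with $\nu=\mu^*|_\Omega$, deduce $h=0$ $\mu^*$-a.e.\ on $\Omega$, and then run a contradiction through the strong minimum principle. This works, but the elementary inequality $G\mu^*\leq\varphi$ on $\Gamma^+$ replaces the variational lemma entirely and avoids both the contradiction and the strong minimum principle.

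Your discussion of the $\mu^*$-a.e.\ versus q.e.\ subtleties on $\partial\Omega$ is more scrupulous than the paper, which in this section takes ``$\mu^*$ is a solution of \eqref{Gueq}'' to mean the equation holds pointwise on $\Gamma^+$; under that reading the boundary step is immediate and your extra care, while sound, is not needed here.
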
 
\begin{proof}
	Since $G \mu^*$ is superharmonic with
	$G\mu^*  = \varphi$ on $\partial \Omega$
	and $\varphi$ harmonic on $\Omega$,
	the minimum principle tells us that $G \mu^* \geq \varphi$
	in $\Omega$. Since $G \mu^* \leq G\mu^* + \sigma  u^* =  \varphi$ on
	$\Gamma^+$, it follows that $G \mu^* = \varphi$
	on $\Omega \cap \Gamma^+$.

	Then $G\mu^*$ is harmonic on $\Omega \setminus \Gamma^+$ (since $\mu^*$ is supported on $\Gamma^+$)
	and it agrees with $\varphi$ on its boundary. Then by the
	maximum/minimum principle for harmonic functions we get
	$G \mu^* = \varphi$ on $\Omega$ and $G \mu^*$ is
	harmonic on $\Omega$.   
	Since (in distributional sense) $\Delta G \mu^* =
	-2 \mu^*$, we then conclude that $\mu^* = 0$ on $\Omega$. 
\end{proof}

 Lemma \ref{lem-s=0} leads to some interesting consequences for equation \eqref{dr_soliton_gas1},
 where $\varphi(z)=\Im z$ is harmonic in $\C$. The most obvious is that in the case $\s\equiv 0$
 on a compact connected region $\G^+$ then $\supp \m^*\subset \part \G^+$. This is a well known fact 
 in potential theory. Moreover, using Proposition \ref{propGvareq2}, we obtain that $\supp \m^*$
 coincides with the outer boundary of $\G^+$. In fact, the latter result holds even if $\s=0$ only
 on the outer boundary of $\O$.
 
 Consider another case when equation \eqref{dr_soliton_gas1} has a solution $u$ on 
 the compact domain $\G^+$ and $\s=0$ on a simple closed curve $\g\subset\G^+$. Then, according to 
 Lemma \ref{lem-s=0}, if $u$ is bounded  on $\g$ then $u\equiv 0$ inside the region bounded by $\g$.

\begin{example}\label{ex-circle}
	Consider the example of circular condensate from \cite{ElTovbis}, where $\G^+$ consists of the 
	upper semicircle $|z|=\r$, $\Im z\geq 0$, with some $\r>0$ and $\s\equiv 0$ on $\G^+$. Then
	$u=\frac{\Im z}{\pi \r}$ and $v=\frac{-4\Im z^2}{\pi \r}$ are solutions of 
	\eqref{dr_soliton_gas1}-\eqref{dr_soliton_gas2} respectively. If we replace $\G^+$ by the upper
	semi disk $|z|\leq \r$, $\Im z\geq 0$ and let $\s$ to be any positive continuous function on $\G^+$
	such that $\s(z)=0$ when $|z|=\r$, then the same $u,v$ on the upper semi circle $|z|=\r$, $\Im z\geq 0$
	with trivial continuation $u=v\equiv 0$ for $|z|<\r$ solve \eqref{dr_soliton_gas1}-\eqref{dr_soliton_gas2} 
	respectively. This example, strictly speaking, does not satisfy conditions of  Lemma \ref{lem-s=0} since
	$\G^+\cap\R\not=\emptyset$ but, nevertheless, it illustrates the idea.
\end{example}

  \subsection{Smoothness in 1D  case with $\s\equiv 0$} \label{sec-sig=0-smooth}
 
 In the rest of this section we consider the smoothness of $d\m^*$ in the 1D case, i.e., when $\G^+$ is a finite collection
 of piece-wise smooth curves that can be closed or opened. 
 Transversal intersection of different curves are allowed, but we consider intersection points, as well
 as the end points of open arcs, as points of non smoothness.
 In this case the reference measure $\lambda$ is simply the arclength measure on the curves.
We also assume that $\varphi$ is a $C^\infty$ function in some neighborhood containing $\G^+$.

We start by considering the case of a soliton condensate, i.e., the case of $\s\equiv 0$ (on $\G^+$).
There are certain results about the smoothness of the minimizing measure $\m$ on a 1D compact $\G^+$ in terms of  its 
logarithmic potential $U\m = -\frac{1}{\pi }\int_{\G^+}\log|z-w||dw|$ in the literature, see, for example, 
\cite{SaffTotik}.  Many of these results can also be applied to signed measures $\m$. It turns out 
that these results can be applied to the Green potential $G\m^*$ of the minimizer $\m^*$ of $J_0$.
 
 Indeed, denote $\G:=\G^+\cup \G^-$, where the compact
 $\G^- \subset \C^-$ is Schwarz symmetrical to $\G^+$. 
Any  Borel measure  $\m$ with $\supp \m\subset \G^+$ we anti-symmetrically extend to the signed measure
$\tilde \m$ on $\G$ by setting $\tilde \mu(S)=\mu(S)$ if $S\subset\G^+$ and $\tilde \mu(S)=-\mu(S)$
 if $S\subset\G^-$. The function $\s$ is extended to $\G$ Schwarz symmetrically. Then, see 
 \eqref{log-pot-sym}, $G\m=U\tilde \m$,   so that 
$   G\m+\s u=\varphi$ on $\G^+$ if and only if
\begin{equation}
\label{full-eq1}
-\frac{1}{\pi } \int_\G \log|z-w|d\tilde{\m}(w)+ \s(z) \tilde u(z)= \tilde\varphi (z)
\qquad \text{ on $\G$},
\end{equation}
where $\tilde \varphi$ denotes anti Schwarz symmetrical (odd) continuation of the function $\varphi$ from $\G^+$ to $\G$
and $\s\tilde u$ is the density of $\s\tilde \m$. 

In the case $\s\equiv 0$ on $\G^+$, considered now, we can WLOG (see Lemma \ref{lem-s=0}) assume that
 $\G^+$ belongs to the 
boundary of the 
 unbounded component $\O$ of $\C^+\setminus\G^+$, i.e., 
 $\G^+\subset \part \O$.   Then, by Proposition \ref{propGvareq2}, $\supp\m^*= \G^+$, provided $\varphi>0$ and superharmonic 
 on $\C^+$. Take a smooth subarc
$\g\subset \G^+$ - it is enough to assume that $\g$ is  $C^{1+\d}$  smooth with some $\d>0$. 
Then, according to Theorem II.1.5 of \cite{SaffTotik}, if the logarithmic potential $U\m$ is $\text{Lip}\, 1$ in a 
neighborhood of $\g$,   then $\m$ is absolutely
 continuous on $\g$  and its density $u$ is given by
 \begin{equation} \label{norm-der}
u(s) = \frac{d\m}{ds}(s)=-\frac 1{2}\le(\frac{\part U \m}{\part n_+}(s) + \frac{\part U \m}{\part n_-}(s)\ri),
 \end{equation}
 where $\frac{\part }{\part n_\pm} $ denote two (opposite) normals to $\g$,  and $s$ is the arclength parameter on $\g$.
It is clear that, because of \eqref{full-eq1}, we can replace $U\m$ with the Green potential $G\m$ in the above 
statement. 

Equations \eqref{full-eq1} and \eqref{norm-der}
show that the smoothness of $u$ can be derived from the smoothness of $G\m$
at the boundary $\G^+$ of $\C^+\setminus\G^+$. Since Green potential $\n(z)=Gu(z)$ satisfies the
Dirichlet 
 boundary value problem
\begin{align}
\begin{cases} 	\label{dir-for-v}
\text{ $\n$ is harmonic on $\mathbb C^+ \setminus \Gamma^+$}, \\
\text{ $\n = \varphi$ on $\Gamma^+$}, \\
\text{ $\n  = 0$ on $\mathbb R$ and at infinity},
\end{cases} 
\end{align}
we can use regularity theorems for boundary value problems from classical PDEs to estimate the smoothness of $\n$. 
Under our assumptions we can represent
\begin{equation} \label{decomp}
\C^+\setminus\G^+=\Omega \cup 
\left(\bigcup_{j=1}^k D_j\right),
\end{equation}
where $\O$ denotes the unbounded and $D_j$'s denote bounded simply connected  components. 
All the domains in \eqref{decomp} are disjoint.

Let $\G_1\subset \G^+$ be a $C^\infty$ closed curve that is the boundary of $D_1$. Then 
$\nu$ is harmonic in $D_1$ and satisfies the Dirichlet condition $\n = \varphi$ on $\partial D_1=\Gamma_1$. Assume that
$\Gamma_1$ is positively oriented. Then, using regularity theorems, in particular Schauder Estimates in H\"{o}lder spaces, 
see for example \cite{Egorovshubin}, 
Section II.2.14, or \cite{Taylor}, we obtain that $\n(z)$ exists and is a $C^\infty$ function on the closure of $D_1$, i.e., on $D_1\cup\G_1$.
Thus, partial derivatives of $ G u(z)$ are $C^\infty$ functions on $D_1\cup\G_1$. In particular, so is 
$\frac{\part G u}{\part n_+}(z)$. 

 In the case when  $\G_1=\partial D_1$ is only a piece-wise $C^\infty$  curve,  $\n$ is  $C^\infty$ smooth
everywhere on $\G_1$ except the points of non smoothness (since smoothness of $\nu$ is a local property).
Similar arguments show that $\n$ is  $C^\infty$ smooth
everywhere on the closure of $\Omega$ except points of non smoothness of $\G^+$.
Thus,  $\frac{\part G u}{\part n_-}(z)$ is also a $C^\infty$ function on $\G^+$ away from the points of non 
smoothness
and, according to \eqref{norm-der}, so is $u(s)$. This result is summarized in the following theorem.

\begin{theorem}\label{cor-smooth1D}
Let $\G^+\subset\C^+$ be a piece-wise $C^\infty$ smooth  curve and let $\varphi$ be a  $C^\infty$ function 
in some domain $B$ containing $\G^+$. 
Then the solution $u^*$ of the integral equation \eqref{Gueq}
is a $C^\infty$ function on
any compact subarc of a smooth arc of $\G^+$. 
\end{theorem}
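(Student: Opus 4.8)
The plan is to trade the smoothness of the density $u^*$ for an elliptic boundary-regularity statement about the harmonic function $\n:=G\m^*$, and then to recover $u^*$ from the jump of the normal derivatives of $\n$ across $\G^+$ via the formula \eqref{norm-der}. As a preliminary I would record that in the case $\s\equiv 0$ treated here the weak solution satisfies $G\m^*=\varphi$ \emph{everywhere} on $\G^+$: a smooth arc is thick at each of its points, so equality on $\supp\m^*$ comes from Theorem \ref{secondthm}(b) and equality off the support from Theorem \ref{solitonmain}(c), that is, from Corollary \ref{cor-eq-u} (cf.\ the reduction $\G^+\subset\part\O$ afforded by Lemma \ref{lem-s=0} and Proposition \ref{propGvareq2}). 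With this in hand, $\n=G\m^*$ is precisely the solution of the Dirichlet problem \eqref{dir-for-v} on each connected component of $\C^+\setminus\G^+$ in the decomposition \eqref{decomp}, with boundary data $\varphi$ on $\G^+$ and $0$ on $\R$.

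Next I would fix a compact subarc $K$ of a smooth ($C^\infty$) arc of $\G^+$ and argue locally. Since $\G^+$ is a compact subset of the open half-plane $\C^+$, it stays at positive distance from $\R$; hence near $K$ the boundary of the adjacent component --- a bounded domain $D_j$ on one side and the unbounded domain $\O$ on the other --- consists of a single $C^\infty$ piece carrying the $C^\infty$ datum $\varphi$, with no contribution from $\R$ and no corners of $\G^+$ nearby. On each side separately I would apply interior together with boundary Schauder estimates for the Laplace operator (as in \cite{Egorovshubin}, Section II.2.14, or \cite{Taylor}): a function harmonic in $D_j$ (respectively $\O$) with $C^\infty$ values on a $C^\infty$ boundary arc extends to a $C^\infty$ function on the closed domain up to that arc. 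This produces, on $K$, one-sided limits of $\n$ and of all its derivatives from both sides; in particular $\n$ is Lipschitz in a two-sided neighborhood of $K$, which is exactly the hypothesis needed to invoke \eqref{norm-der}. I note that \eqref{norm-der}, stated for the logarithmic potential $U\m$, transfers verbatim to $G\m^*$ through the anti-symmetric extension identity \eqref{full-eq1}.

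Finally I would assemble the two one-sided statements. The normal derivatives $\part_{n_+} G\m^*$ and $\part_{n_-} G\m^*$ are the restrictions to $K$ of the derivatives of the smooth-up-to-boundary extensions from $\O$ and from $D_j$, so each is $C^\infty$ on $K$; by \eqref{norm-der} the density $u^*(s)=-\tfrac12\bigl(\part_{n_+} G\m^*+\part_{n_-} G\m^*\bigr)(s)$ is then a $C^\infty$ function of the arclength parameter $s$ on $K$. Since $K$ is an arbitrary compact subarc of a smooth arc, the theorem follows.

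The step I expect to be the main obstacle is the boundary-regularity claim: globally the boundary of each component is only piecewise smooth and may carry corners or transversal self-intersections, so the Schauder machinery must be deployed strictly locally, on a single smooth piece at a time and over the compact set $K$ that avoids all non-smooth points, with care that the localization introduces no spurious boundary terms. By comparison, the passage through \eqref{full-eq1} and the concluding differentiation are routine once the up-to-boundary smoothness is established.
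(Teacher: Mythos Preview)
Your outline matches the paper's argument almost exactly: set up $\nu=G\mu^*$ as the solution of the Dirichlet problem \eqref{dir-for-v}, invoke local Schauder estimates on each component of $\C^+\setminus\G^+$ to obtain $C^\infty$ regularity up to the smooth portions of the boundary, and then read off $u^*$ via \eqref{norm-der}. The one genuine gap is in your preliminary step. You invoke Theorem~\ref{solitonmain}(c), Theorem~\ref{secondthm}(b), and Corollary~\ref{cor-eq-u} to conclude $G\mu^*=\varphi$ everywhere on $\G^+$, but all three of those results carry the hypothesis that $\varphi$ is positive, continuous, and superharmonic on all of $\C^+$; the theorem here assumes only that $\varphi$ is $C^\infty$ on a neighborhood $B$ of $\G^+$. (Lemma~\ref{lem-s=0}, which you also cite, likewise needs $\varphi$ harmonic inside the enclosed region.) For the physically relevant choice $\varphi(z)=-4\Im z\,\Re z$ from \eqref{dr_soliton_gas2}, for instance, none of these hypotheses hold, and for such $\varphi$ the weak solution produced by Theorem~\ref{theo-eq2} is a signed measure rather than the minimizer of $J_0$, so your appeal to those results is not valid as stated.

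The paper closes this gap by running exactly your argument under the additional assumption that $\varphi>0$ is superharmonic in $\C^+$, and then reducing the general $C^\infty$ case to this one via the decomposition $\varphi=\varphi_2-\varphi_1$ constructed in the proof of Theorem~\ref{theo-eq2}: the $C^\infty$ smoothness of $\varphi$ passes to $\varphi_1,\varphi_2$, each $u_j^*$ is $C^\infty$ on compact smooth subarcs by the first part, and $u^*=u_2^*-u_1^*$ inherits smoothness by linearity. Once you insert this splitting step, your proof is complete and essentially identical to the paper's.
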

\begin{proof}
We first consider the case when $\varphi>0$ and superharmonic in $\C^+$.
As it was shown above,  both  normal derivatives $\part \n / \part n$ are $C^\infty$ functions on
any compact subarc $\g$ of a smooth arc of $\G^+$ together with the adjacent   regions of $\C^+\setminus \G^+$.
Then, by 
formula \eqref{norm-der}, $u^*=\frac{d\m^*}{ds}$, where $\mu^*$ is the minimizer of the energy functional $J_0$,
is a  $C^\infty$ function on $\g$.
 
Consider now the case of general $\varphi$. As in the  proof of Theorem \ref{theo-eq2}, we can represent 
 $\varphi=\varphi_2-\varphi_1$, 
 where both $\varphi_{1,2}$ are positive and superharmonic in $\C^+$. Moreover, the $C^\infty$ smoothness of  $\varphi$
 implies that  both $\varphi_{1,2}$ are also $C^\infty$  functions. 
 We can now apply arguments from the first part of the proof to obtain the statement of the theorem.
 \end{proof}
 

\begin{remark} \label{rem-endp-t}
The case when $\G^+$ is piecewise $C^\infty$ collection of contours  includes the case when $\G^+$ contains arcs
 with endpoints (not closed curves). In that case behavior near the endpoints is given by Theorem IV.2.6 of \cite{SaffTotik},
 where $\a=2\pi$. 
 In particular, we obtain that $u(z)|z-z_0|^\epsilon \in L^2_{loc}$ for any $\epsilon>0$ on a piece of $\G^+$ that includes 
 an endpoint $z_0$.
\end{remark} 
 
 \begin{remark}
Let $\G^+$ be a  piece-wise finitely smooth curve (with sufficient smoothness). Then one can use similar arguments
to show that $u^*=\frac{d\m^*}{ds}$ is also piece-wise  finitely smooth. 
\end{remark}

\subsection{Smoothness in 1D  case with $\s\geq 0$} 

Consider now the case when $\s\geq 0$ on $\G^+$.
Take some smooth closed subarc $\g$ of $\G^+$ (it contains its endpoints or encircles a region).  
We assume $\s$ and $\varphi$ to be sufficiently smooth on $\g$ and also 
that $\s>0$ on $\g$.
According to Theorem \ref{secondthm}, the density $u^*=d\m^*/ds$  of the minimizer is continuous on $\g$.
Let us write the 
equation \eqref{Gueq} as
\begin{equation} \label{bv-on-1}
 G_1u^*+\s u^*=\varphi - G_2 u^*  \qquad \text{on $\g$,}
 \end{equation}
 where the integration in $G_{1}$ is over $\g$ and $G_{2}=G-G_{1}$
We can perceive  $u^*$ on $\G^+\setminus\g$ as to be given 
and consider \eqref{bv-on-1} as a second kind Fredholm     integral equation for $u^*$ on $\g$. The right hand side of 
\eqref{bv-on-1} is harmonic at the interior points of $\g$.
It can also be shown to be  H\"{o}lder continuous
on $\g$ with any H\"{o}lder exponent $\nu\in(0,1)$.
Under these conditions it follows, 
see, for example, \cite{Muskhelishvili}, Section 51.1,
that $u^*$ is H\"{o}lder continuous
on $\g$ with any H\"{o}lder exponent $\nu\in(0,1)$. 

Let us  differentiate \eqref{bv-on-1} with respect to the arclength $s$. We obtain
\begin{equation} \label{bv-on-1a}
\s (u^*)'= -\s' u^*- (G_1 u^*)'+ \varphi' - (G_2 u^*)' =: RH.
\end{equation}
We now prove that the right hand side of \eqref{bv-on-1a} is H\"{o}lder continuous
on a proper compact  subarc  $\g_0$ of
$\g$. It is obvious that the first and the third terms of $RH$ are H\"{o}lder continuous
on $\g$.  
The second term $(G_1 u^*)'$ is a singular
integral of  a H\"{o}lder continuous function and therefore
must be H\"{o}lder continuous on $\g$ with the same $\nu$. Finally,
since $u^*$ is  H\"{o}lder continuous on $\g$,
$(G_2 u^*)'$ is  harmonic on $\g_0$.
 So, we proved the following lemma.

\begin{lemma} \label{lem-sig-1}
Let $\g\subset \G^+$ be a smooth closed arc such that $\s>0$ and is smooth on $\g$. Then $\frac{du^*}{ds}$  is  H\"{o}lder continuous
on any compact subarc $\g_0\subset\g$ with any H\"{o}lder exponent $\nu\in(0,1)$. Here $s$ is the arclength parameter  on $\g$.
\end{lemma}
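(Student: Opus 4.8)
The plan is to treat \eqref{Gueq} on $\g$ as a second-kind Fredholm integral equation and then bootstrap the regularity by differentiating once in the arclength parameter $s$. Since $\s>0$ on $\g$, Theorem \ref{secondthm}(a) already guarantees that $u^*$ is continuous on $\g$ and that $G\m^*+\s u^*=\varphi$ holds pointwise there. First I would split the Green potential as $G=G_1+G_2$, with $G_1$ integrating over $\g$ and $G_2$ over $\G^+\setminus\g$, and rewrite the equation in the form \eqref{bv-on-1}, namely $\s u^*+G_1u^*=\varphi-G_2u^*$ on $\g$. The right-hand side is harmonic at the interior points of $\g$, because the part of $\m^*$ supported on $\G^+\setminus\g$ is disjoint from those points, and one checks it is H\"older continuous on all of $\g$. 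The classical regularity theory for second-kind equations with a weakly singular (logarithmic) kernel (\cite{Muskhelishvili}, Section 51) then upgrades $u^*$ from merely continuous to H\"older-$\nu$ for every $\nu\in(0,1)$.

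Next I would differentiate \eqref{bv-on-1} with respect to $s$ to obtain \eqref{bv-on-1a},
\[ \s\,(u^*)'=-\s'u^*-(G_1u^*)'+\varphi'-(G_2u^*)'=:RH, \]
and show that $RH$ is H\"older continuous on any compact subarc $\g_0\subset\g$. The terms $\s'u^*$ and $\varphi'$ are sums and products of smooth functions with the H\"older function $u^*$, hence H\"older; and $(G_2u^*)'$ is the arclength derivative of a function harmonic in a full neighborhood of $\g_0$ (since $\G^+\setminus\g$ is positively separated from the compact $\g_0$), so it is $C^\infty$ there and certainly H\"older.

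The only delicate term is $(G_1u^*)'$, and this is where I expect the main obstacle. Differentiating the kernel $\log\!\left|(z-\bar w)/(z-w)\right|$ along $\g$ splits it into two pieces: the contribution of $\log|z-\bar w|$ is non-singular on $\g$ (because $z\in\g\subset\C^+$ while $\bar w\in\C^-$ stays bounded away from $z$), so its derivative is smooth; the contribution of $-\log|z-w|$ yields, under tangential differentiation, a Cauchy-type principal-value singular integral of the density $u^*$. Here I would invoke the Privalov--Plemelj theorem (\cite{Muskhelishvili}, Sections 18--19): a Cauchy singular integral of a H\"older-$\nu$ density is again H\"older-$\nu$ for every $\nu\in(0,1)$. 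This is precisely why the exponent is preserved but the value $\nu=1$ is lost, and why one must stay on a \emph{compact} subarc $\g_0$, away from the endpoints of $\g$ where the singular integral develops endpoint singularities. Care is also needed to take the tangential (rather than normal) derivative, so that no jump term across $\g$ enters the computation.

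Once $RH$ is known to be H\"older-$\nu$ on $\g_0$ for every $\nu\in(0,1)$, the conclusion is immediate: since $\s$ is smooth and strictly positive on the compact arc $\g$, division by $\s$ preserves H\"older continuity, and therefore $(u^*)'=RH/\s$ is H\"older continuous on $\g_0$ with every exponent $\nu\in(0,1)$, which is the assertion of the lemma.
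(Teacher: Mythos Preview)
Your proposal is correct and follows essentially the same route as the paper: split $G=G_1+G_2$, use the second-kind Fredholm structure and \cite{Muskhelishvili} to upgrade $u^*$ to H\"older-$\nu$, then differentiate in $s$ and check each term of $RH$, with the Cauchy-type singular integral $(G_1u^*)'$ handled by Privalov--Plemelj. Your treatment of $(G_1u^*)'$ (separating the smooth $\log|z-\bar w|$ piece from the singular $-\log|z-w|$ piece) and your explanation of why $\g_0$ must be a compact subarc are in fact more explicit than the paper's own account.
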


 \begin{lemma} \label{lem-sig-0}
 Let $\s=0$ on some open $C^\infty$  smooth arc $\G_1\subset\G^+$.
 Then $u^*=d\m^*/ds$ is $C^\infty$ smooth  of $\G_1$. 
 \end{lemma}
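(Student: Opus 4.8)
The plan is to reduce Lemma \ref{lem-sig-0} to the situation already handled in Theorem \ref{cor-smooth1D} (the condensate case $\sigma \equiv 0$), since smoothness of $u^*$ on $\Gamma_1$ is a purely local property. The key observation is that, although $\sigma$ vanishes only on $\Gamma_1$ and may be positive elsewhere on $\Gamma^+$, the equation $G\mu^* + \sigma u^* = \varphi$ collapses to $G\mu^* = \varphi$ on $\Gamma_1$. First I would establish this identity rigorously. Being a $C^\infty$ arc, $\Gamma_1$ is a connected set with more than one point and is therefore thick at each of its points; since $\Gamma_1 \subset S_0 = \{ z \in \Gamma^+ \mid \sigma(z) = 0\}$, the set $S_0$ is thick at every $z_0 \in \Gamma_1$ as well. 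When $\varphi$ is positive, continuous and superharmonic, Theorem \ref{secondthm}(b) then gives $G\mu^*(z_0) = \varphi(z_0)$ for all $z_0 \in \Gamma_1$. For a general $C^\infty$ function $\varphi$ I would use the splitting $\varphi = \varphi_2 - \varphi_1$ from the proof of Theorem \ref{theo-eq2}, with $\varphi_1, \varphi_2$ positive, superharmonic and $C^\infty$, write the weak solution as $\mu^* = \mu_2^* - \mu_1^*$, and apply Theorem \ref{secondthm}(b) to each $\mu_j^*$ separately to conclude $G\mu^* = \varphi_2 - \varphi_1 = \varphi$ on $\Gamma_1$.

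Next I would localize. Because $\Gamma_1$ is an open arc of $\Gamma^+$ containing no self-intersection and no endpoint, every compact subarc $\gamma_0 \subset \Gamma_1$ has a neighborhood $V \subset \mathbb{C}^+$ with $V \cap \Gamma^+ = V \cap \Gamma_1$, so that $V \cap \supp \mu^* \subset \Gamma_1$. Consequently $G\mu^*$ is harmonic on $V \setminus \Gamma_1$ (the contribution of $\mu^*$ restricted to $\Gamma^+ \setminus \Gamma_1$ is harmonic, hence $C^\infty$, throughout $V$), while on $V \cap \Gamma_1$ it equals the $C^\infty$ function $\varphi$. Thus, exactly as in the boundary value problem \eqref{dir-for-v}, the restriction of $\nu = G\mu^*$ to each of the two components of $V \setminus \Gamma_1$ is harmonic with $C^\infty$ Dirichlet data on the smooth boundary piece $V \cap \Gamma_1$.

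Then I would invoke the same elliptic boundary regularity (Schauder estimates in H\"older spaces) already used in Theorem \ref{cor-smooth1D}: the harmonic function $\nu$ extends to a $C^\infty$ function up to $V \cap \Gamma_1$ from each side, and in particular both normal derivatives $\partial \nu / \partial n_\pm$ are $C^\infty$ on $\gamma_0$. After the anti-symmetric extension \eqref{log-pot-sym} we have $G\mu^* = U\tilde\mu$, so the density formula \eqref{norm-der} applies and yields $u^*(s) = -\tfrac12\big( \tfrac{\partial G\mu^*}{\partial n_+}(s) + \tfrac{\partial G\mu^*}{\partial n_-}(s) \big)$, which is $C^\infty$ on $\gamma_0$. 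Since $\gamma_0 \subset \Gamma_1$ is an arbitrary compact subarc, it follows that $u^*$ is $C^\infty$ on all of $\Gamma_1$.

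The main obstacle is really the first step, namely upgrading the weak statement ($G\mu^* \le \varphi$, with equality only $\mu^*$-a.e.) to the pointwise identity $G\mu^* = \varphi$ everywhere on $\Gamma_1$; this is precisely where the thickness of the arc and Theorem \ref{secondthm}(b) do the essential work, together with the decomposition of $\varphi$ into positive superharmonic parts to cover general $C^\infty$ data. Once that identity is in hand, the problem becomes genuinely local and identical in structure to the condensate case $\sigma \equiv 0$, so the regularity argument and the representation \eqref{norm-der} carry over verbatim. The only additional care needed is to choose the localizing neighborhood $V$ small enough to exclude the rest of $\Gamma^+$ and, in particular, the region where $\sigma > 0$.
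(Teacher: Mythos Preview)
Your proposal is correct and follows essentially the same route as the paper: reduce to the condensate case on $\Gamma_1$ and then invoke the elliptic boundary regularity already used for Theorem \ref{cor-smooth1D}, finishing with the density formula \eqref{norm-der}. The only organizational difference is that the paper splits $G = G_1 + G_2$ (integration over $\Gamma_1$ and over $\Gamma_2 = \Gamma^+\setminus\Gamma_1$), writes $G_1 u^* = \varphi - G_2 u^*$ on $\Gamma_1$, and then applies Theorem \ref{cor-smooth1D} to the single arc $\Gamma_1$ with the new $C^\infty$ right-hand side $\varphi - G_2 u^*$; you instead localize $G\mu^*$ directly in a neighborhood $V$ of a compact subarc and run the Schauder argument in $V\setminus\Gamma_1$. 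These are the same computation in different packaging, and your version has the advantage of spelling out, via thickness and Theorem \ref{secondthm}(b) (together with the $\varphi=\varphi_2-\varphi_1$ decomposition for general data), why the pointwise identity $G\mu^*=\varphi$ actually holds on all of $\Gamma_1$ rather than merely $\mu^*$-a.e.; the paper's proof uses this identity without making that step explicit.
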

\begin{proof}
We write $Gu^*=G_1u^*+G_2u^*$,
where  in $G_j u^*$ we integrate over $\G_j$, $j=1,2$   and $\G_2=\G^+\setminus \G_1$. Then
 \begin{equation} \label{bv-on-01}
 G_1u^*(z)=\varphi(z) - G_2 u^*(z)
 \end{equation}
 on $\G_1$. Let $B$ be a region containing $\G_1$ and separated from $\R$.
 Take a function $\phi$, harmonic in $B\setminus \G_1$, continuous in the closure of $B$ and satisfying 
 $\phi=\varphi(z) - G_2 u^*(z)$ on $\G_1$. Denote by $f$ a $C_0^\infty$ extension of $\phi$ to the closure
 of $\C^+$ satisfying $f_\R=0$.
 Denote $w:=G_1u^*(x)$. Then $w$ satisfies the Dirichlet problem for the Laplace equation in $\C^+\setminus \G_1$, with the boundary
 values $w=0$ on $\R$ and $w=f$ on $\G_1$.  We can now apply Theorem 
 \ref{cor-smooth1D}
 to prove that $u^*$ is $C^\infty$ smooth  on any compact subarc of $\G_1$.
 \end{proof}

\section{Bound state fNLS  and KdV condensates }\label{sec-quasimom-vert-cond}

Connection between a bound state fNLS soliton gas and the corresponding KdV soliton gas was described in
Section \ref{sec-quasimom-vert}. That connection implies that all the obtained above results about fNLS soliton gases,
applicable to bound state gases, can be reformulated for KdV soliton gases. That includes existence and uniqueness
of solutions, non negativity of the density of states $u$, smoothness and geometry of $\supp u$.
In particular,
the main
 Theorem \ref{solitonmain} and Theorem \ref{secondthm}, are applicable to the KdV soliton gas NDR
 \eqref{dr_kdv-soliton_gas1}-
 \eqref{dr_kdv-soliton_gas2} with a given continuous and non negative on $\G^+$ function $\s(z)$. 
 
Moreover, it turns out that the solution $u$ of \eqref{dr_soliton_gas1} in the case of the bound state fNLS condensate 
($\s\equiv 0$) is proportional to the density of the quasimomentum differential $dp$, see \eqref{pq},
on the hyperelliptic Riemann surface
$\Rscr$ associated with $\G$. This result is formulated 
(Theorem \ref{lem-boun-cond}) 
and proven in this section. Its extension to the KdV condensate is also addressed below.   

Consider $\s\equiv 0$ and $\G=\G^+\cup\G^-\subset i\R$, which is Schwarz symmetrical and consists of $2N+1$ segments with 
endpoints $ib_j$, $j=0, 1\dots,N$,
and beginning points $ia_j$, $j=1\dots,N$, in  $\C^+$, where $0<b_0<a_1<b_1<\dots<b_N$, and their complex conjugates in $\C^-$.
This is the case of a general (even genus)  bound state fNLS condensate
mentioned in Section \ref{sec-quasimom-vert}, which has $v(z)\equiv 0$ solution to \eqref{dr_soliton_gas2}.
Our goal is the following theorem.

\bt \label{lem-boun-cond}
Denote by $dp$ the real normalized quasimomentum differential on the Riemann surface $\Rscr$ (see Section \ref{sect-backg}).
Then: i) $dp$ has zero  $\bf{B}$ - periods; ii) $\frac{dp}{dz}$ is    Schwarz symmetrical (odd on $i\R$); 
iii) $u(z)=\frac{idp}{\pi dz}>0$ on $\G^+\setminus \{0\}$ and it satisfies  \eqref{dr_soliton_gas1} with $\s\equiv 0$
on $\G^+$.
\et

\begin{proof}
Switching from the condensate equation \eqref{dr_soliton_gas1} to equation \eqref{full-eq1} (with anti Schwarz symmetrical $u$) and 
differentiating  the latter 
 in $z$,  
we obtain $\pi H u=1$ on $\G$, where $H$ denotes the Finite Hilbert Transform (FHT) on $\G$ (which is oriented upwards). 
The inversion formula for FHT $H$ (see, for example, \cite{OkadaElliott} when $N=0$) yields
\begin{align}\label{u-eq} 
u(z)=\frac{-1}{\pi^2 R(z)}\int_{\G}\frac{R_+(w)dw}{w-z}= \frac{-1}{2\pi^2 R(z)}\oint_{\hat\g}\frac{R(w)dw}{w-z}=\cr
\frac{i}{\pi R(z)}\le(\le. \Res \frac{R(w)}{w-z}\ri|_{w=\infty} -\k\le. 
\Res \frac{R(w)}{w-z}\ri|_{w=z}
\ri)
\end{align}
where  $\hat\g$ is a negatively oriented circle containing $\G$ but not
containing $z$ if $z\not\in\G$ and $\k=0$ if $z\in\G$ with $\k=1$ otherwise.
Calculating the residue at $w=\infty$ we obtain
\begin{equation} \label{u-exp}
u(z)=\frac{iP(z)}{\pi R(z)}\qquad \text{on $\G$,}
\end{equation}
where $P(z)$ is a monic odd polynomial of degree $2N+1$ with real coefficients. The exact values of these coefficients, which can be obtained 
from $\le.\Res \frac{R(w)}{w-z}\ri|_{w=\infty}$, are not essential because  the null space of $H$ is spanned by
$\frac{z^k}{R(z)}$, where $k=0,\dots,2N-1$. Since $u(z)$ must be odd (anti-Schwarz symmetrical), we consider only the odd 
powers of $k$. It is clear that $u(z)\in\R$ when $z\in \G$. Note that
 $ R(z)$ has opposite signs on the neighboring segments of $\G$,
 where by convention  we evaluate $R$ on the positive (left) side of $\G$. 
It is clear that in order to have $u>0$ on $\G^+$  and $u<0$ on $\G^-$  the polynomial  $P(z)$ must have a zero in each of the $2N$ gaps between  
consecutive  segments of $\G$ (the remaining zero is  $z=0$).
To determine these zeros 
of $P(z)$ we use the fact that the logarithmic potential   
\begin{equation} \label{Gu2N-1}
Gu(z)=-\frac 1\pi\int_{\G}\log \le| w-z\ri|\frac{iP(w)(-idw)}{R(w)}
\end{equation}
must be continuous, see Theorem I.5.1, assertion 4, \cite{SaffTotik}. Differentiating \eqref{Gu2N-1} in $z$
and using the residues we obtain 
\begin{equation} \label{Gu2N}
[Gu(z)]'=\frac 1\pi\int_{\G}\frac{P(w)dw}{(w-z)R(w)}=-i\le (1- \k\frac{P(z)}{ R(z)}\ri),
\end{equation}
so that
\begin{equation} \label{Gu2N-2}
Gu(z)=-iz+i\k\int^z_m\frac{P(w)dw}{ R(w)},
\end{equation}
where $\k$ is the same as in \eqref{u-eq},  $m=ib_{j-1}$ if $z$ is on the $j$th gap $(ib_{j-1},ia_j)$, $j=1,\dots,N$ and  $m=-ib_{j-1}$ if $ z$ is on the 
complex conjugate gap $-j$ in $\C^-$. 
Now the continuity of $Gu$ requirement is translated into the system of linear equations
\begin{equation} \label{eq-for-P}
\int^{a_j}_{b_{j-1}}\frac{P(w)dw}{ R(w)}=0,\quad   j=1,\dots,N, 
\end{equation}
for the coefficients of the odd monic polynomial $P$. By the symmetry, the corresponding equations hold on the gaps in $\C^-$.

The system \eqref{eq-for-P} has a unique real solution. That follows from that fact that 
$\Im \t$, where $\t$ 
is the Riemann period matrix for $\Rscr$, is positive definite.
Thus, $iudz$ is a real normalized meromorphic differential with the poles at infinity of both sheets,
and, according to \eqref{pq},  $dp=-i\pi u(z)dz$ is the quasi momentum differential on  $\Rscr$.
Moreover, all  the $\bf{B}$ periods of  $dp$ are zeros. 

It remains only  to prove that $u(z)>0$ on $\G^+$. In fact, since the system \eqref{eq-for-P} requires that
there must be just one zero of an odd polynomial
$P(z)$ in every gap, it is sufficient to prove that $u>0$ on the last segment $(ia_N,ib_N)$.  

Indeed, 
since all the zeros of $P(z)$
are on $(-ib_N,ib_N)$, $\arg u(z)=\frac \pi 2$ on $(ib_N,+i\infty)$. When $z$ crosses $ib_N$ and stays on the positive
(left) side of $(ia_N,ib_N)$, the argument $of R(z)$ gains $\frac \pi 2$ whereas the argument of $P(z)$ does not change.
Hence, the lemma is proved.  
\end{proof}

\begin{remark} \label{rem-odd-gen}
The above arguments can be repeated for Schwarz symmetrical  $\G\subset i\R$ that consists of  $2N$ segments.
The corresponding $\Rscr$ has   genus $2N-1$ and $u(z)=\frac{iP(z)}{\pi R(z)}$ {on $\G$,} where $R(z)$ is odd
and $P(z)$ is even. $P(z)$ must have exactly one zero in each of the $2N-2$ gaps lying entirely in $\C^+$ or $\C^-$
and exactly two symmetrical zeros in the central gap $[-ia_1, ia_1]$.
\end{remark}

Theorem \ref{lem-boun-cond} and Remark \ref{rem-odd-gen} imply that solutions of the NDR \eqref{dr_kdv-soliton_gas1}-
 \eqref{dr_kdv-soliton_gas2}  for the KdV soliton condensate   ($\s\equiv 0$) is always represented by the density
 of the corresponding meromorphic differentials on $\Rscr$. In particular, $u_{KdV}(z)=\hf u_{fNLS}(iz)$, $z\in \G^+$,
 where $\G^+\subset \R$ and $u_{KdV}$ are defined for \eqref{dr_kdv-soliton_gas1} and $u_{fNLS}$ is defined by
 the corresponding \eqref{dr_soliton_gas1}. Equation \eqref{dr_kdv-soliton_gas2} can be solved similarly to 
 \eqref{dr_kdv-soliton_gas1}. Its solution is given by the density of the corresponding real normalized meromorphic differential that
 has $O(z^2)$ behavior as $z\to \infty$ (on both sheets).


\begin{thebibliography}{99}
	
\bibitem{ArmitageGardiner}	D.H. Armitage and S.J. Gardiner,
Classical Potential Theory, Springer-Verlag, London, 2001.

\bibitem{Belokolos}  E. D. Belokolos, A. I. Bobenko, V. Z. Enolski, A. R. Its, and V. B. Matveev,
Algebro-geometric Approach to Non-linear Integrable Equations, Springer, New York, 1994.

\bibitem{bertolatovbis2015}  M. Bertola and A. Tovbis, Meromorphic differentials with imaginary periods on degenerating hyperelliptic curves, 
Analysis and Mathematical Physics \textbf{5}, N1, (2015),  1-22.

\bibitem{DYC} B. Doyon, T. Yoshimura, and J.-S. Caux, Soliton gases and generalized hydrodynamics, 
Phys. Rev. Lett. \textbf{120}, 045301 (2018).

\bibitem{DSY} B. Doyon, H. Spohn, and T. Yoshimura, A  geometric viewpoint on generalized hydrodynamics,
Nucl. Phys. B \textbf{926}, (2018), 570-583.

\bibitem{DZZ}  S.  Dyachenko,  D.  Zakharov,  and  V.  Zakharov,  
Primitive  potentials  and  bounded  solutions  of  the
KdV equation.
Phys. D \textbf{333}, (2016), 148-156.

\bibitem{Egorovshubin} Yu. Egorov and M. Shubin, 
Partial Differential Equations I, 
Foundations of Classical Theory, Springer, 1997.
	
\bibitem{El2003} G.A. El, 
The thermodynamic limit of the	Whitham equations, 
Phys. Lett. A \textbf{311}, (2003), 374-383.
	
\bibitem{ElKamch} G.A. El and  A.M. Kamchatnov, 
Kinetic equation for a dense soliton gas,
Phys. Rev. Lett. \textbf{95}, N20,  (2005) 204101   
 
\bibitem{EKT}  G. A. El, E. G. Khamis, and A. Tovbis, 
Dam break problem for the focusing nonlinear Schrodinger equation and the generation of rogue waves,
Nonlinearity \textbf{29} no. 9, (2016),  2798-2836.

\bibitem{ElTovbis} G.A. El and A. Tovbis,
Spectral theory of soliton and breather gases 
for the focusing nonlinear Schr\"odinger equation,
Phys. Rev. E \textbf{101}, 052207 (2020).
	
\bibitem{ForLee} M. G. Forest and J.-E. Lee, 
Geometry and modulation theory for the periodic 
Nonlinear Schr\"odinger equation,
in Oscillation Theory, Computation, and Methods of Compensated Compactness,
edited by C. Dafermos, J. L. Ericksen, D. Kinderlehrer,
and  M.  Slemrod  (Springer  New  York,  New  York,  NY,
1986) pp. 35–70
	
\bibitem{Gelash} A. A. Gelash, 
Formation of rogue waves from a locally perturbed condensate,
Phys. Rev. E \textbf{97}, 022208 (2018).	
	
\bibitem{Girotti} M. Girotti, T. Grava, R. Jenkins, and K. McLaughlin,
Rigorous asymptotics of a KdV soliton gas, 
to appear in Comm. Math. Phys. 2021 (arXiv:1807.00608). 
	
\bibitem{Helms} L.L. Helms, 
Introduction to Potential Theory,
Wiley-Interscience, New York, 1969.

\bibitem{Gino2}  S. Li and G. Biondini, 
Soliton interactions and degenerate soliton complexes for the focusing nonlinear Schr\"{o}dinger equation with nonzero background, 
Eur. Phys. J. Plus \textbf{133}, (2018).
	
\bibitem{Gino1} S. Li, G. Biondini, and C. Schiebold, 
On the degenerate soliton solutions of the focusing
nonlinear Schr\"{o}dinger equation,
J. Math. Phys. \textbf{58}, 033507 (2017).
	
\bibitem{Muskhelishvili}   N. Muskhelishvili, 
Singular Integral Equations, 
Dover Publications, 2013. 

\bibitem{OkadaElliott}  S. Okada and D. Elliott,
The finite Hilbert transform in $L^2$, 
Mathematische Nachrichten \textbf{153}, (1991), 43–56.

\bibitem{Prossdorf} S. Pr\"ossdorf, 
Linear Integral Equations, 
Itogi Nauki i Techniki, Sovrem. Probl. Math. \textbf{27}, 5-130 (1988).
	
\bibitem{Ransford} T. Ransford, 
Potential Theory in the Complex Plane, 
Cambridge University Press, Cambridge, 1995.
		
\bibitem{SaffTotik} E.B. Saff and V. Totik, 
Logarithmic Potentials with External Fields,  
Springer Verlag, Berlin, 1997. 

 
	
\bibitem{Taylor} M. E. Taylor, 
Partial Differential Equations I, 
Basic Theory, Springer	 Verlag, NY, 1996. 
	
\bibitem{VY} D.-L. Vu and T. Yoshimura, 
Equations of state in generalized hydrodynamics, 
SciPost Phys. \textbf{6}, 023 (2019).	


\bibitem{Zakharov} V. E. Zakharov,  
Kinetic equation for solitons, 
Sov. Phys. JETP \textbf{33}, 538 (1971).

\bibitem{Za09} V. E. Zakharov,  
Turbulence in integrable systems, 
Stud. in Appl. Math. \textbf{122}, (2009), no. 3, 219-234.
			
\bibitem{ZS} V. E. Zakharov and A. B. Shabat,
Exact theory of two-dimensional self-focusing and one-dimensional self-modulation of waves in nonlinear media,
Sov. Phys. JETP \textbf{34}, 62 (1972). 

	
\end{thebibliography}
\end{document}